\documentclass[a4paper,12pt]{article}
\usepackage{graphicx}
\usepackage{multirow}
\usepackage{bbm}
\usepackage[labelfont=bf,font=normalsize]{caption}
\usepackage{float}
\usepackage{footnote}
\usepackage{amssymb}
\usepackage{pifont}
\usepackage{ifpdf}
\ifpdf
\else
\usepackage{pstcol,pst-fill,pstricks}
\fi
\usepackage{natbib}
\usepackage{mathpazo}
\usepackage{import}
\usepackage{dsfont}
\usepackage{enumerate}
\usepackage{lscape}
\usepackage{epsfig}
\usepackage[latin1]{inputenc}
\usepackage[OT1]{fontenc}
\usepackage[T1]{fontenc}

\usepackage[margin=1in]{geometry}
\usepackage[doublespacing]{setspace}

\usepackage{color}
\usepackage{mdframed}
\usepackage{tikz}
\usepackage{tkz-graph}
\usepackage[bottom,flushmargin]{footmisc}
\usepackage{subcaption}  
\usepackage{times}
\usepackage{fancyhdr,graphicx,amsmath,amssymb}
\usepackage[ruled,vlined]{algorithm2e}
\usepackage{mathtools}
\usepackage{amsmath}
\usepackage{amsthm}
\usepackage{thmtools}
\usepackage{calrsfs}
\usepackage{booktabs}

\makeatletter
\newcommand*\bigcdot{\mathpalette\bigcdot@{.5}}
\newcommand*\bigcdot@[2]{\mathbin{\vcenter{\hbox{\scalebox{#2}{$\m@th#1\bullet$}}}}}
\makeatother
\usepackage{authblk}
\usepackage{setspace}
 \newcommand{\be}{\begin{equation}}
\newcommand{\ee}{\end{equation}}

\newcommand{\abar}{\underline{a}}

\DeclareMathAlphabet{\pazocal}{OMS}{zplm}{m}{n}
\allowdisplaybreaks

\usepackage[hyperfootnotes=false]{hyperref}
 \hypersetup{colorlinks=true,
 linkcolor=blue,
 citecolor=blue,
 filecolor=black,
 urlcolor=black,
 pdfstartview={Fit},
pdfpagemode=UseNone
 }
\usepackage{cleveref}

\declaretheoremstyle[
  headfont=\normalfont\scshape,
  numbered=unless unique,
  bodyfont=\normalfont,
  spaceabove=1em,
  spacebelow=1em,
]{exmpstyle}

\newcommand{\p}{\partial}

\newcommand{\defeq}{\vcentcolon=}

\def\ee{\mathsf{e}}

\usepackage[mathlines]{lineno}

\newtheorem{Ass}{Assumption}
\newtheorem{prop}{Proposition}

\theoremstyle{definition}

\newtheorem{theorem}{Theorem}
\newtheorem{corollary}{Corollary}

\title{The Response of Consumption to Interest Rates with Borrowing Constraints: An Analytical Approach\thanks{I thank Benjamin Moll, Bar Light as well as participants at the NUS brownbag seminar for their interesting comments.}}
\author[1]{Jordan Roulleau-Pasdeloup}
\affil[1]{Banque de France}
\date{\today}

\begin{document}
\begin{titlepage}
\maketitle
\thispagestyle{empty}

\abstract{I derive an explicit mapping from initial assets, income and the real interest rate to consumption for an income fluctuation problem with a borrowing constraint and CARA utility. I show that there exists a threshold of initial wealth over which the partial equilibrium consumption response to a permanent increase in the real interest rate is positive, consistent with recent empirical evidence. I further show that precautionary savings reinforce the possibility of crowding in and that the possibility of a positive response extends to a more standard CRRA utility whenever the elasticity of intertemporal substitution is strictly less than 1.}
\vspace{.5cm}\\
\noindent{\bfseries JEL codes:  D14, E21, C02} \\
\noindent{\bfseries Keywords: Consumption, Savings, Interest Rates, Borrowing Constraints} \\
\end{titlepage}


\section{Introduction}

What are the drivers of consumption? This is a question that is almost as old as formal macroeconomics itself. Indeed, the dynamics of consumption feature front and center in the early contributions of \cite{Ramsey1928mathematical}, \cite{Fisher1930theory} and \cite{Keynes1936general}. The approaches differ in the sense that both \cite{Ramsey1928mathematical} and \cite{Fisher1930theory} derive consumption decisions from optimization while \cite{Keynes1936general} postulates a 'reasonable' consumption function. Another difference in approach lies in the fact that, while the first two placed a large emphasis on assets and interest rates, \cite{Keynes1936general} argued that the main determinant of current consumption was current disposable income. The recent and growing literature on heterogeneous agents New Keynesian (HANK) models \textemdash a terminology introduced in \cite{Kaplan2018monetary}\textemdash unifies these approaches and finds that disposable income plays a key role in optimal consumption decisions alongside interest rates and assets all while taking borrowing constraints seriously following \cite{Schechtman1976income}. Despite decades of work on the topic however, we don't have an explicit formula for how initial assets, income and interest rates affect consumption decisions in the presence of borrowing constraints.

In this context, the main contribution of this paper is to solve explicitly for the optimal consumption decision in closed form for an infinitely-lived household facing both borrowing constraints as well as stochastic income. The optimal consumption decision obtained in this framework is expressed explicitly as a function of initial assets, income and interest rates. The contribution borrows from and contributes to two growing literatures: a macro literature that focuses on the impact of interest rates on consumption decisions and a micro literature that focuses on initial assets and income as the primary drivers of said decisions. I use this framework to study explicitly how changes in real interest rates interact with initial asset positions to shape consumption decisions. This approach is thus largely complementary to the numerical solutions that allow us to quantify the effects in that it informs us about the role of different structural parameters. Before delving into the specifics of the approach that permit a closed form characterization, I briefly describe these two literatures. 

From a macroeconomics perspective, recent contributions such as \cite{Kaplan2018monetary} as well as \cite{Bilbiie2025hanksson} have emphasized the role of monetary policy through interest rates as well fiscal policy through taxes and redistribution in shaping consumption.\footnote{See also \cite{Debortoli2024idiosyncratic}.} At the same time, the predictions of these theories can now be explored in great detail thanks to the recent access to high quality
micro-data. Two recent contributions include \cite{Holm2021transmission} and \cite{Bilbiie2025hanksson}, who both use high-quality administrative data from Norway. Another evidence for the enduring legacy of  \cite{Keynes1936general}'s approach is that his concept of a \textit{consumption function} is a technical apparatus that is often used to frame discussions in the recent literature \textemdash see \cite{Bilbiie2020new}. The consumption functions that can be found in the HANK literature are virtually all computed numerically, see for recent example \cite{Kase2025estimating} who use a Deep Learning approach. That approach highlights that both liquid and illiquid assets are important drivers of consumption decisions. The approach in \cite{Bilbiie2020new} and \cite{Bilbiie2025hanksson} is different and focuses on analytical tractability. This offers new perspectives and allows one to gain valuable intuition about the differing behaviors of hand-to-mouth households versus savers. In that approach, the consumption function is usually computed as a deviation from a steady state in which assets are in zero net supply. As a result, interest rates become the main drivers and the consumption of savers is heavily influenced by inter-temporal substitution motives. 

From a micro perspective, there is a growing literature that dates back to \cite{Schechtman1976income} and tries to characterize the optimal consumption function in the presence of borrowing constraints and stochastic income. Except for very few special cases such as the one in \cite{Hall1978stochastic} where certainty equivalence holds, it quickly became apparent (see \cite{Zeldes1989optimal}) that optimal consumption functions had to be solved numerically. Against this backdrop, the recent literature on this topic has then focused on establishing the \textit{properties} of the consumption function.\footnote{We know from \cite{Carroll1996concavity} that, for utility functions with hyperbolic absolute risk aversion (HARA), the consumption function has to be \textit{concave} in initial assets. In a recent contribution, \cite{Toda2021necessity} shows that the HARA assumption is necessary for concavity. We know from \cite{Benhabib2015wealth} that the consumption function is \textit{asymptotically linear} in initial assets. See also \cite{Gouin2023pareto}.} Some progress has been made on deriving closed-form solutions in these environments in \cite{Park2006analytical}, \cite{Holm2018consumption}, \cite{Achdou2022income}, and more recently \cite{Roulleau2026explicit} in continuous-time settings. While some contributions in this literature such as \cite{Ma2020income}, \cite{Commault2025Permanent_income} and \cite{Roulleau2026explicit} offer some results on income, by far most of the focus has been on initial assets as a driver of consumption.\footnote{One notable exception is \cite{Lehrer2018effect}, who study the effect of interest rates in a typical discrete-time income fluctuation problem. They do not derive optimal consumption as an explicit function of interest rates which is the main focus of this paper.}  

As stated before, the goal of this paper is to try to bridge the gap between these two literatures and derive an optimal consumption function where the roles of initial assets, income and interest rates are made fully explicit. The two key ingredients that make this characterization possible are $(i)$ continuous time and $(ii)$ a constant absolute risk aversion (CARA) utility function. The results obtained in \cite{Helpman1981optimal} imply that consumption functions in discrete time income fluctuation problems are piece-wise linear and can only be expressed implicitly. Loosely speaking, the assumption of continuous time smooths out a piece-wise linear function into one that can be expressed analytically. The CARA assumption has a storied history in macroeconomics and finance dating back to \cite{Merton1971optimum}, \cite{Kimball1989precautionary} as well as \cite{Caballero1990consumption}. In these frameworks, it considerably simplifies the algebra given that the consumption Euler equation implies that the derivative (instead of the growth rate with a CRRA utility function) of consumption is constant across time. A complementary simplification arises from the absence of wealth effects under CARA which makes optimal consumption linear in assets and thus makes the model amenable to aggregation in a context with heterogeneous agents.\footnote{See \cite{Wang2003caballero}, \cite{Wang2007equilibrium}, \cite{Toda2017huggett}, \cite{Acharya2020understanding} and \cite{Acharya2023optimal} for examples leveraging this property.}

In contrast with these contributions, I show that, in the presence of a borrowing constraint, consumption will \textit{not be a linear function} of initial assets but a \textit{concave} one instead. In that sense, the consumption function inherits a property that has by now become standard. This means that in the current framework the marginal propensity to consume (henceforth MPC) out of initial assets will be a non-trivial function of initial assets and not a counterfactually constant one as in the existing literature. In addition, it is well-known that consumption can turn negative with a CARA utility function because the (lower) \cite{Inada1963two} condition fails \textemdash the marginal utility of consumption at 0 is finite, see \cite{Blanchard1988consumption}. This however doesn't happen in the framework developed in this paper, once again because of the presence of an explicit borrowing constraint. In that sense, the current framework leverages the simplicity brought about by the CARA utility function without having any of its well-known shortcomings. 

As a useful baseline, I start with a framework in which there is no borrowing constraint. I derive the optimal consumption decision as an explicit function of assets, income and the real interest rate. I show that the MPC out of an increase in the real interest rate is shaped by inter-temporal substitution and income effects. For a low level of initial assets, the inter-temporal substitution effect dominates and consumption decreases after an increase in real interest rates. This is the typical behavior of a saver in a two agents new Keynesian (TANK) model \`a la \cite{Bilbiie2008limited}. I show that there is a threshold level of initial assets after which the income effect dominates and consumption \textit{increases} after an increase in real interest rates. This is very much in line with the empirical evidence reported in \cite{Holm2021transmission}. More generally, the current paper offers a formalization of the impact on consumption of the savings redistribution channel emphasized by \cite{Doepke2006inflation} in the context of inflation, by \cite{Coibion2017innocent} for monetary policy shocks and by \cite{Di2017interest} for mortgage rates: changes in real interest rates have heterogeneous effects on consumption depending on households' net asset positions.

Moving on to a setup with a borrowing constraint, I show that one can use the (first branch of the) Lambert W function to write consumption in terms of assets, income and the real interest rate explicitly. In turn, I leverage this expression to show that the MPC out of an increase in the real interest rate is larger in level compared to the unconstrained case: in the constrained case, when households cut back on consumption they do it by less and when they increase consumption they do it by more compared to the unconstrained case. With low initial assets, the risk of hitting the constraint becomes higher so that households engage less in inter-temporal substitution and cut back less on consumption. With high initial assets, households have a longer runway to deplete their assets and thus consume comparatively more out of an increase in real interest rates. These two effects together imply that the threshold level of initial assets above which the MPC out of an increase in the real interest rate is positive is \textit{lower}: consumption crowding-in becomes more likely. 

The CARA assumption allows one to get one step further and maintain an explicit solution in the presence of stochastic income. To make this possible, I develop an extension that is conceptually close in spirit to \cite{Kimball1990precautionary}. More precisely, I assume that income follows a two-state Poisson process that takes on values $y+\sigma$ and $y-\sigma$ with $\sigma\geq 0$. The key assumption is that $\sigma$ is small. Under that assumption, one can derive an optimal consumption function that features a precautionary motive. I show that with low initial assets, an increase in interest rates makes consumption decrease by more: high interest rates decrease the permanent value of human wealth conditional on being a low income type. Therefore, low initial wealth households have more incentives to save for precautionary reasons and cut back on consumption by more. At the other extreme of the initial asset distribution, an increase in interest rates increases the value of existing precautionary savings. Therefore, for a given downward income risk high initial wealth households need to save less and thus consume more. 

In this context, the use of a CARA utility function in this paper should be understood as a tractability device that isolates the interaction between borrowing constraints and interest rate effects as it allows for analytical results that are otherwise unavailable in a more standard CRRA setting. That being said, one would still like to know whether the crowding-in result is specific to a CARA utility function and whether it also survives under the assumption of a more standard CRRA utility function. This is not clear cut as \cite{Lehrer2018effect} show that when substitution effects dominate (when the Elasticity of Intertemporal Substitution (EIS) is higher than 1), consumption is crowded out after an increase in the real interest rate. I complement their analysis and show that for the arguably more relevant case where the EIS is lower than 1, there exists a threshold level of assets above which consumption is crowded in after an increase in the real interest rate. While one cannot express the consumption function in closed form in that case, one can still use the properties of the Lambert W function to study the sign of the MPC with respect to the interest rate.

\textbf{Related literature}\textemdash Given its focus on solving a consumption/savings problem with borrowing constraints and/or stochastic income, this paper is related to \cite{Schechtman1976income}, \cite{Schechtman1977some}, \cite{Helpman1981optimal}, \cite{Deaton1989saving}, \cite{Zeldes1989optimal}, \cite{Kimball1990precautionary}, \cite{Carroll1996concavity}, \cite{Wang2003caballero}, \cite{Benhabib2015wealth}, \cite{Toda2017huggett}, \cite{Lehrer2018effect}, \cite{Ma2020income}, \cite{Ma2021theory}, \cite{Toda2021necessity}, \cite{Ma2022asymptotic}, \cite{Gouin2023pareto}, \cite{Ma2025theory}, \cite{Commault2025Permanent_income}, \cite{Lee2025wealth}, \cite{Ma2026optimal} and \cite{Toda2026characterization} in discrete time as well as \cite{Park2006analytical}, \cite{Wang2007equilibrium}, \cite{Holm2018consumption}, \cite{Achdou2022income}, \cite{Fischer2026concave} and \cite{Roulleau2026explicit} in continuous time. These papers either give an \textit{implicit} consumption function, an inverse consumption function where assets are expressed as a function of consumption, or an asymptotic expression for high/low initial assets. In \cite{Roulleau2026explicit}, I derive an explicit consumption function with a CRRA utility function which is exact when $r=0$ and is only an approximation when $r\neq 0$. As a result, it is not well suited for the study of changes in interest rates on consumption.

This paper is also related to the macroeconomics literature on the effects of changes in interest rates on consumption such as \cite{Clarida1999science}, \cite{Kaplan2018monetary}, \cite{Bilbiie2020new}, \cite{Debortoli2024idiosyncratic}, \cite{Bilbiie2025hanksson} and many others that use a dynamic stochastic general equilibrium structure to answer these questions. The results reported in the current paper are also related to and complementary with the sufficient statistics approach used in \cite{Auclert2019monetary} and \cite{Farhi2022price}. While they derive formulas for the effect of changes in current/future interest rates on consumption using sufficient statistics that can be computed from the data, I derive formulas for how this effect changes with structural parameters from the model. Note that the macro literature usually focuses on persistent but transitory deviations of interest rates from their baseline. In this sense, the explicit results in the current paper can be viewed as a limit case where these deviations are highly persistent.

Finally, this paper is also closely related to the literature that seeks to understand empirically the effects of unexpected changes in interest rates from a macro perspective as in \cite{Romer2004new}, \cite{Gertler2015monetary}, \cite{Nakamura2018high}, \cite{Miranda2021transmission}, \cite{Bauer2023alternative} or from a micro perspective as in \cite{Di2017interest}, \cite{Cloyne2020monetary}, \cite{Holm2021transmission}, \cite{Eichenbaum2022state} and more recently \cite{Foulis2026interest}.

The rest of the paper is organized as follows. In Section \ref{sec:cons_func} I derive an explicit closed-form solution for a consumption/savings problem with CARA utility, borrowing constraints and constant permanent income. In Section \ref{sec:cons_rates}, I study the reaction of consumption to an increase in the real interest rate and derive an explicit threshold for that reaction to flip signs. In Section \ref{sec:extensions}, I study how that threshold changes across two extensions: one with stochastic income and one where the utility function is CRRA. I conclude in Section \ref{sec:conclusion}.

\section{An explicit consumption function and its properties}
\label{sec:cons_func}

In order to set the stage, I start with a consumption savings problem in which income is constant over time. I later extend the model to a setting with stochastic income that nests the current one as a special case.

\subsection{Preferences and technology}

Consider the following household maximization program in continuous time:
\begin{align}
\notag
\max_{\{c_t\}} &\int_0^{\infty} e^{-\rho t} u(c_t) dt\\
\label{eq:max_pgm}
\text{s.t}\quad \frac{da_t}{dt} &= r a_t + y - c_t\\
\qquad c_t &\geq 0\ ,\ a_t\geq \underline{a}
\notag
\end{align}
where $r$ is the real interest rate, $y$ is (for now, constant) permanent income and $\underline{a}$ is the borrowing limit for this household. The utility function is of the CARA form with $u(c) = -(e^{-\gamma c})/\gamma$. Throughout the paper, I maintain the following assumption:
\begin{Ass}
\label{ass:parameters}
The parameters are such that $\rho>r$, $\gamma>0$ and $\abar<0$.    
\end{Ass}

\subsection{The optimal consumption function}
\label{sec:optimal_path}

As in \cite{Holm2018consumption}, the value function for this problem solves the following Hamilton-Jacobi-Bellman equation:
\begin{align}
\rho V(a) = \max_{c\geq 0} \left\{u(c) + V'(a)\cdot(ra+y-c)\right\}   
\end{align}
subject to $a\geq \abar$. Taking the first order conditions with respect to $c,a$ and using the envelope condition one obtains after re-arranging:
\begin{align}
c(a) = ra+y + \frac{\rho-r}{\gamma c'(a)}
\label{eq:ODE}
\end{align}
which is a special case of equation (7) in \cite{Holm2018consumption} who works with a more general class of utility functions. Under this more general class, he shows that this first order non-linear ODE is solvable and provides an implicit expression for the consumption function. In contrast, the specific utility function considered here is amenable to an explicit closed form solution. This solution is described in the following theorem:
\begin{theorem}
\label{thm:cons_func}
Assume that the Assumption \ref{ass:parameters} holds and that $\abar>-y/r$. It follows that the optimal consumption function that solves equation \eqref{eq:ODE} is given by:
\begin{align}
\label{eq:cstar_W0}
c^\ast(a,y;r,\abar) &= r a+y+\frac{\rho-r}{\gamma r}\left[1+W_0\left(f(a)\right)\right]\\
f(a)& \defeq -e^{-\left[1+(a-\abar)\frac{\gamma r^2}{\rho-r}\right]}
\end{align}
where $a\in(\abar,\infty)$ and $W_0$ denotes the first branch of the Lambert W function. Given the properties of this function, it follows that:
\begin{itemize}
    \item $W_0\left(f(a)\right)\to -1$ as $a\to\abar$
    \item $W_0\left(f(a)\right)\to 0$ as $a\to\infty$
    \item $W_0\left(f(a)\right)\in(-1,0)$ for $a\in(\abar,\infty)$
\end{itemize}
\end{theorem}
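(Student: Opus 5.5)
The plan is to solve the ODE \eqref{eq:ODE} by a change of variables that turns it into an equation the Lambert W function inverts, and then to pin down the constant using the borrowing constraint.

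\textbf{Step 1: reduce to an explicit inverse function.} Write $x(a)\defeq c(a)-ra-y$, the excess of consumption over current income, so that $c'(a)=x'(a)+r$. Substituting into \eqref{eq:ODE} gives
\begin{align*}
x\,(x'+r)=\frac{\rho-r}{\gamma}.
\end{align*}
This is autonomous and separable, since $x' = \frac{\rho-r}{\gamma x}-r$. Put $k\defeq(\rho-r)/(\gamma r)$, so that $x'=r(k-x)/x$. Separating variables, $\frac{x}{k-x}\,dx = r\,da$. Integrating gives
\begin{align*}
-x-k\ln|k-x| = ra + C.
\end{align*}
This yields $a$ as an explicit function of $x$, in the spirit of \cite{Holm2018consumption}.

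\textbf{Step 2: fix the boundary condition.} At the constraint the standard argument is that consumption equals income, so $\dot a=0$ and the agent stays at $\abar$; hence $x(\abar)=0$. This uses $\rho>r$ (impatient agents pile up at the constraint) and $r\abar+y>0$, i.e.\ $\abar>-y/r$, so that $c(\abar)=r\abar+y>0$ and $c\ge0$ is slack. Note that $k>0$ under Assumption \ref{ass:parameters} when $r>0$, which I take as the relevant case since the formula divides by $r$. For $a>\abar$ we need $\dot a<0$, i.e.\ $x\in(0,k)$; $x$ cannot reach $k$ because $x=k$ is a steady state of the ODE. Setting $x=0$ at $a=\abar$ gives $C=-r\abar-k\ln k$. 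The implicit solution becomes
\begin{align*}
-\frac{x}{k}-\ln\!\left(1-\frac{x}{k}\right)=\frac{r(a-\abar)}{k}=\frac{\gamma r^2(a-\abar)}{\rho-r}.
\end{align*}

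\textbf{Step 3: invert with Lambert W.} Set $w\defeq x/k-1\in(-1,0)$. The implicit solution becomes $-(1+w)-\ln(-w)=\frac{\gamma r^2(a-\abar)}{\rho-r}$, so
\begin{align*}
-w = e^{-1-w-\gamma r^2(a-\abar)/(\rho-r)}.
\end{align*}
Equivalently, $w e^{w}=-e^{-[1+(a-\abar)\gamma r^2/(\rho-r)]}=f(a)$. Since $f(a)\in(-1/e,0)$ for $a>\abar$ and $w>-1$, the correct inverse is the principal branch, $w=W_0(f(a))$. This gives $c=ra+y+k(1+W_0(f(a)))$, which is \eqref{eq:cstar_W0}.

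\textbf{Step 4: the listed limits.} These follow from the properties of $W_0$. As $a\to\abar$, $f\to -1/e$ and $W_0(-1/e)=-1$. As $a\to\infty$, $f\to0$ and $W_0(0)=0$. Finally, $W_0$ is increasing and maps $(-1/e,0)$ onto $(-1,0)$.

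\textbf{Main obstacle.} The delicate part is justifying the boundary condition $x(\abar)=0$ and the choice of branch, i.e.\ showing this solution is the optimal one rather than another solution of the ODE. I would argue via the state-constraint boundary condition for the HJB ($V'(\abar)\ge u'(r\abar+y)$) together with the requirement $x\in(0,k)$. The latter excludes the $W_{-1}$ branch and the solution with $x>k$, which violates the transversality condition because assets would diverge.
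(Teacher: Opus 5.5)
Your proposal is correct and follows essentially the same route as the paper's proof. Your variable $x=c-ra-y$ is the negative of the paper's savings function $s(a)$, and the separation of variables, the boundary condition at $\abar$, and the Lambert W inversion are the same steps. The one minor difference is branch selection: you pick $W_0$ directly from $w\in(-1,0)$ (i.e.\ $0<x<k$), whereas the paper picks it by requiring convergence to the unconstrained consumption function as $a\to\infty$; both arguments are valid.
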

\begin{proof}
See Appendix \ref{sec:app_cons_func}.    
\end{proof}
The main take-away from Theorem \ref{thm:cons_func} is that the expression for optimal consumption is fully explicit in terms of assets, income, interest rate and the borrowing constraint. This expression involves the Lambert W function, which is defined as the inverse of the equation $ye^y=x$. The inverse is not unique and thus has two branches called $W_0$ and $W_{-1}$. The Lambert W function is useful because it has a number of well-known properties \textemdash see \cite{Mezo2022lambert} for a textbook treatment. 

The first bullet point guarantees that a constrained household consumes the available disposable income $r\abar + y$. The second bullet point guarantees that than an arbitrary wealthy household will behave as an unconstrained one given that the consumption function then boils down to $c_u^\ast(a,y;r) = r a+y+\frac{\rho-r}{\gamma r}$, which is the unconstrained consumption function in this setting\textemdash hence the subscript $u$. This explains why the first branch of the Lambert W function is relevant here as the second branch diverges to $-\infty$ when its argument goes to 0. Finally, the third bullet point guarantees that the presence of the borrowing constraint only acts as a discount on the intertemporal substitution motive. Given the assumption that $\rho>r$, this motive is strictly positive and thus $W_0\left(f(a)\right)\in(-1,0)$ guarantees that constrained consumption will be strictly lower than unconstrained consumption. Intuitively, for someone that is close to the constraint $\abar$, consumption smoothing after, say, an increase in the real rate of interest, is less of an option. Even though the household is relatively impatient, consumption in the short run doesn't go up by that much because it increases the risk of hitting the constraint. 

As alluded to in the introduction, it is well known that consumption functions obtained under the assumptions of CARA utility and no borrowing constraints are linear in assets and can exhibit negative consumption. In this context, the expression derived in \ref{thm:cons_func} will be useful to show that the consumption function with a borrowing constraint doesn't inherit any of these counterfactual properties. 

\subsection{A characterization}

The usual motivation for using utility functions of the CARA class is that they feature no wealth effects and give a consumption function that is \textit{linear} in assets. That property has been used several times in the literature following the seminal contributions of \cite{Kimball1989precautionary} and \cite{Caballero1990consumption}. This is however only true in an environment without borrowing constraints, which is different from the one considered in this paper.\footnote{To the best of my knowledge, there is no treatment of an income fluctuation problem that combines a CARA utility function with a borrowing constraint.} Given the form of the optimal consumption function derived in the last subsection, one can anticipate that it won't be true in the current setting. With this in mind, one would still like the consumption function to display properties that are considered standard in the literature: increasing in permanent income and assets and possibly concave in assets. In order to do so, one has to characterize the Jacobian and Hessian matrices associated with the consumption function \eqref{eq:cstar_W0}. These are detailed in the following proposition.

\begin{prop}
\label{prop:JH}
Assume that the consumption function $c^\ast(a,y;r,\abar)$ is defined as in Theorem \ref{thm:cons_func}. Then it follows that its Jacobian vector $\mathbf{J}_c$ and Hessian matrix $\mathbf{H}_c$ are given by:
\begin{align*}
\mathbf{J}_c &\ = 
\begin{bmatrix}
\frac{r}{1+w(a)}   & 1
\end{bmatrix}\quad,\quad
\mathbf{H}_c \ =
\begin{bmatrix}
\frac{\gamma r^3}{\rho-r}\frac{w(a)}{(1+w(a))^3} & 0\\
0 & 0
\end{bmatrix}
\end{align*}     
where I have defined $w(a)\defeq W_0\left(f(a)\right)$ for convenience. Given the properties of $w(a)$ derived in Theorem \ref{thm:cons_func}, the optimal consumption function is increasing in assets and permanent income as well as concave in assets. 
\end{prop}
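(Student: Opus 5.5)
The plan is to differentiate \eqref{eq:cstar_W0} directly, using the standard derivative of the Lambert W function, $W_0'(x) = \frac{W_0(x)}{x\,(1+W_0(x))}$ for $x\neq 0$, $x>-1/e$. It is valid here because Theorem \ref{thm:cons_func} places $f(a)\in(-1/e,0)$ for $a\in(\abar,\infty)$ and gives $w(a)\in(-1,0)$, so $1+w(a)\neq 0$.

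The derivative with respect to $y$ is immediate. The variable $y$ enters $c^\ast$ only through the additive term $+y$, since $f$ depends on $a$, $\abar$, $r$, $\rho$, $\gamma$ but not on $y$. Hence $\partial c^\ast/\partial y=1$, and all second derivatives involving $y$ vanish. This gives the zero entries of $\mathbf{H}_c$.

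For the derivative in $a$, first note that $f'(a) = -\frac{\gamma r^2}{\rho-r} f(a)$. The chain rule then gives
\begin{align*}
w'(a) = \frac{w(a)}{f(a)(1+w(a))}\, f'(a) = -\frac{\gamma r^2}{\rho-r}\frac{w(a)}{1+w(a)}.
\end{align*}
Substituting into $\partial c^\ast/\partial a = r + \frac{\rho-r}{\gamma r} w'(a)$ yields
\begin{align*}
\frac{\partial c^\ast}{\partial a} = r - r\frac{w(a)}{1+w(a)} = \frac{r}{1+w(a)}.
\end{align*}
Differentiating once more gives
\begin{align*}
\frac{\partial^2 c^\ast}{\partial a^2} = -\frac{r\,w'(a)}{(1+w(a))^2} = \frac{\gamma r^3}{\rho-r}\frac{w(a)}{(1+w(a))^3},
\end{align*}
which is the claimed Hessian entry.

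The sign conclusions follow from the bounds in Theorem \ref{thm:cons_func}. Since $w(a)\in(-1,0)$, we have $1+w(a)\in(0,1)$, so $\partial c^\ast/\partial a = r/(1+w(a)) > 0$ whenever $r>0$. This is the case implicitly assumed through $\abar>-y/r$ and the $\frac{1}{r}$ terms; indeed the marginal propensity to consume then exceeds $r$. The income derivative equals $1>0$. For concavity, $\gamma>0$ and $\rho-r>0$ by Assumption \ref{ass:parameters}, $w(a)<0$, and $(1+w(a))^3>0$, so $\partial^2 c^\ast/\partial a^2<0$.

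The only delicate point is justifying the Lambert derivative formula on the relevant domain, including the singularity at $a\to\abar$ where $w\to-1$. There I would simply note that the statements are made on the open interval $(\abar,\infty)$, where $w(a)>-1$ strictly, so every expression above is well defined.
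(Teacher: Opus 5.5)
Your proposal is correct and follows the paper's own proof. You differentiate \eqref{eq:cstar_W0} directly using $W_0'(x)=W_0(x)/\bigl(x(1+W_0(x))\bigr)$, apply the chain rule through $f'(a)=-\frac{\gamma r^2}{\rho-r}f(a)$, and read off the signs from $w(a)\in(-1,0)$. Your explicit remark that $r>0$ is needed for the sign conclusions, which the setup $\abar<0$, $\abar>-y/r$ leaves implicit, is a useful addition.
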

\begin{proof}
See Appendix \ref{app:proof_prop_JH}    
\end{proof}
The main take-away from Proposition \ref{prop:JH} is that, while consumption is an affine function of income, it is a concave function of assets. This can be clearly seen in Figure \ref{fig:Consumption_Function}, which plots both the constrained and unconstrained consumption functions in terms of initial assets. One can also read off most of the properties detailed in Theorem \ref{thm:cons_func} from that Figure.

\begin{figure}[ht]
	\centering
	\bigskip
	\caption{Consumption functions with and without borrowing constraint}

	{\small
\includegraphics[width=.75\textwidth]{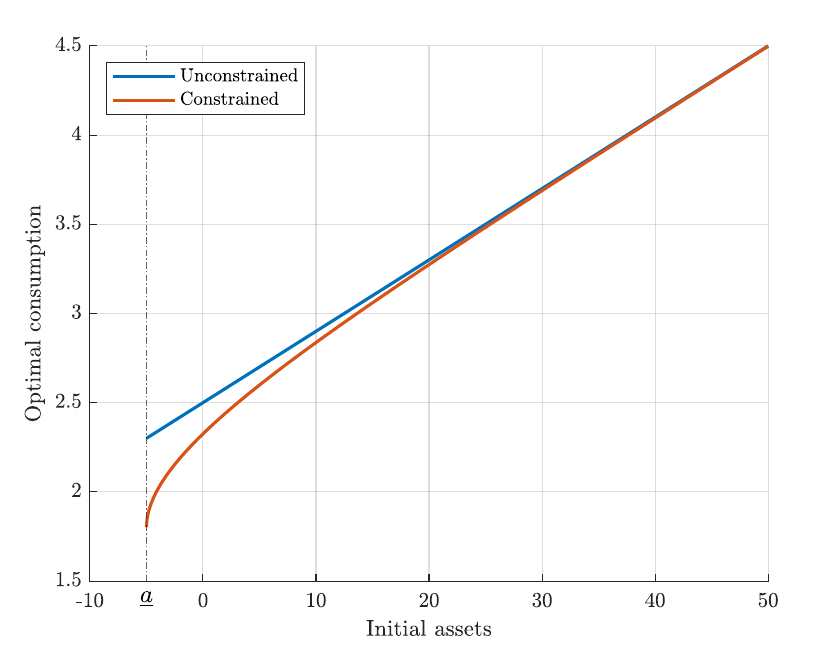}
	}
	\begin{minipage}{0.5\linewidth}
	{
 	 \footnotesize\emph{Notes: For this figure, I use the following parameter values: $r=0.04, \gamma = 2, \rho = 0.08$, $\abar=-5$ and $y=2$.} 
	 }
	\end{minipage}
    \label{fig:Consumption_Function}
\end{figure}

As a result, the consumption function shares some qualitative properties with its CRRA counterpart. With this in mind, I now move on to the main topic of interest, the effect of an increase in the real interest rate.

\section{Consumption and interest rates}
\label{sec:cons_rates}

When does consumption increase after an increase in the interest rate? This is a question that isn't usually covered in the existing literature on income fluctuation problems, which largely focuses on how consumption reacts to \textit{initial assets}. See for example the results in \cite{Benhabib2015wealth} among others. Some contributions derive results about the effects of income in that environment, such as \cite{Ma2020income} and \cite{Roulleau2026explicit} but these are few and far between. There is even less of a focus on the effects of varying the (real) interest rate within this literature. One exception is \cite{Lehrer2018effect}. One obstacle preventing this is the difficulty in getting an explicit mapping from interest rates to consumption in that environment. Armed with the mapping from Theorem \ref{thm:cons_func}, I show in this section that one can get clear-cut results about the effects of interest rates on consumption. 

The focus on interest rates and consumption conversely features front and center in the growing literature on heterogeneous agents New Keynesian (HANK) models. Although virtually all of these models are solved numerically, some do derive a consumption function \textemdash see for example \cite{Bilbiie2020new} and more recently \cite{Bilbiie2025hanksson}. In that case, the focus is decidedly on the interest rate. However, this is done in a simplified environment in which certainty equivalence holds. In addition, the setup in those papers either assumes assets that are in zero net supply for simplicity or considers a log-linear approximation around a steady state with zero assets. That buys a lot of tractability and generates critical insights in a general equilibrium environment, but it makes the focus very different from the literature on income fluctuation problems. 

The goal of this section is to bridge that gap and study the effects of interest rates on consumption in a model which features non-zero assets. In particular, I focus on the response of consumption to an increase in interest rate and how it depends on the household's asset position. The findings reported in \cite{Holm2021transmission} suggest that this dependence is far from trivial. Throughout this section, I will refer to the effect on consumption of a marginal increase in the interest rate as the marginal propensity to consume out of an increase in the interest rate (MPCr). In order to fix ideas and to connect with the existing literature, it will be useful to work first with the version of the model where the borrowing constraint never binds. 

\subsection{An unconstrained benchmark}
\label{sec:benchmark_unconstr}

To obtain the unconstrained benchmark, I consider first the limit as $\abar\to -\infty$. In that case, as we have seen before, the consumption function is given by:
\begin{align}
\label{eq:cstar_unconst}
c_u^*(a,y;r) =  ra+y+\frac{\rho-r}{\gamma r}   
\end{align}
In this context, consumption for a given initial level of assets $a$ is given by two main components. The first one is the annuity value of permanent income with is given by $r$ times $a+y/r$. Since there is no borrowing constraint, the household can borrow at will and look forward far into the future. If the consumption profile was constant, this would be the only component that matters. In general however, the consumption profile will not be constant. Indeed, under CARA preferences, the consumption Euler equation is given by $dc_t/dt=(r-\rho)/\gamma$ which is strictly negative given the impatience condition. The household will consume more in the present and there will be a "shortfall" of consumption in the future. The extra consumption in the present increases by the annuity value of all future shortfalls, the second term in \eqref{eq:cstar_unconst}. 

An interesting knife-edge case arises when $\rho=r$, which is used in \cite{Hall1978stochastic} to obtain the classic result that consumption is a martingale under a quadratic utility function. In that case, the permanent income hypothesis holds and consumption is both $(i)$ constant over time and $(ii)$ equal to its annuity value. This is the case where the inter-temporal substitution motive is absent from the model. This motive features front and center in the recent New Keynesian literature and is the main reason why consumption necessarily declines after an increase in interest rates.\footnote{It should be said that, in the New Keynesian literature, one usually assumes that $\rho=r$ in the long run and then considers a temporary deviation $r+\epsilon$ from the long run interest rate.}

Beyond the transitory versus permanent distinction, the deeper point here is that considering a model that is log-linearized around a steady state with zero assets gives an unambiguous decrease of consumption on impact. Given that the consumption function given in \eqref{eq:cstar_unconst} is clearly non-linear one can see that, under the assumption that $a=0$, consumption necessarily decreases after an increase in $r$. It is then straightforward to notice that, with non-zero initial assets this may not be the case. This is described in the following corollary of Theorem \ref{thm:cons_func}.

\begin{corollary}
\label{cor:threshold_unconstr}
Assume that the impatience condition $\rho>r$ holds and that $\abar\to -\infty$ so that the consumption function is given by \eqref{eq:cstar_unconst}. Then the MPC out of an increase in the interest rate is strictly negative if and only if
\begin{align*}
a<\frac{\rho}{\gamma r^2}   \defeq \overline{a}^u, 
\end{align*}
where the superscript $u$ stands for 'unconstrained'.
\end{corollary}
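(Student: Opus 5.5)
The plan is to differentiate the unconstrained consumption function \eqref{eq:cstar_unconst} with respect to $r$, holding $a$ and $y$ fixed. Since $c_u^\ast(a,y;r)$ is an explicit smooth function of $r$ on $r>0$, this gives the MPCr directly:
\begin{align*}
\frac{\partial c_u^\ast}{\partial r} = a + \frac{\partial}{\partial r}\left[\frac{\rho}{\gamma r} - \frac{1}{\gamma}\right] = a - \frac{\rho}{\gamma r^2}.
\end{align*}
The only step to get right is rewriting the intertemporal-substitution term as $\frac{\rho-r}{\gamma r}=\frac{\rho}{\gamma r}-\frac{1}{\gamma}$. The constant $-1/\gamma$ then drops out under differentiation, so only $\rho/(\gamma r)$ contributes, with derivative $-\rho/(\gamma r^2)$.

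The sign condition then follows at once. The MPCr is strictly negative if and only if $a<\rho/(\gamma r^2)=\overline{a}^u$, which gives both directions of the equivalence simultaneously. This uses $\gamma>0$ and $r>0$ from Assumption~\ref{ass:parameters}; these are also implicit in the formula \eqref{eq:cstar_unconst} itself, which divides by $r$. The impatience condition $\rho>r$ guarantees $\rho>0$, so the threshold $\overline{a}^u$ is strictly positive. In particular, $a=0$ lies strictly below it, which recovers the earlier remark that consumption falls after a rate increase when assets are zero.

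I would also note how the two terms split. The term $a$ is the income effect of a higher return on existing wealth, and $-\rho/(\gamma r^2)$ is the substitution effect. I would make sure the statement is read with $\overline{a}^u$ defined by this expression.

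There is no real obstacle. The one point to state explicitly is the justification for using \eqref{eq:cstar_unconst}. It is the pointwise limit of \eqref{eq:cstar_W0} as $\abar\to-\infty$, because then $f(a)\to 0$ and $W_0(f(a))\to 0$ by Theorem~\ref{thm:cons_func}. The corollary is then simply a statement about that limiting function, so no interchange of limit and derivative is needed.
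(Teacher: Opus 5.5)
Your proposal is correct and is the paper's argument: the paper also differentiates \eqref{eq:cstar_unconst} to get $\partial c_u^\ast/\partial r = a-\rho/(\gamma r^2)$ and reads off the sign. One small point: Assumption~\ref{ass:parameters} does not itself impose $r>0$; that restriction comes from \eqref{eq:cstar_unconst} being well defined, as you also note.
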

\begin{proof}
Note that $\partial c_u^*(a, y; r) / \partial r = a-\rho/(\gamma r^2)$, which immediately gives the result.     
\end{proof}
This result arises because of an income effect coming from a positive net asset position. If initial assets are positive, then an increase in the interest rate increases the disposable income of the household. That effect counteracts the inter-temporal substitution effect that goes in the opposite direction. If the initial asset position is such that it is above the threshold derived in Corollary \ref{cor:threshold_unconstr}, then consumption increases after an increase in interest rates. This cannot happen in a standard New Keynesian model because that wealth effect is typically ignored. Note also that if the net asset position is negative, then consumption decreases by even more. In that case, the household is in debt and servicing that debt becomes harder after the increase in interest rates: the household has to cut back on consumption. 

In recent two agents/tractable heterogeneous agents New Keynesian models (see \cite{Bilbiie2025monetary}), the rich agent is usually the one pricing government bonds and whose consumption follows a standard Euler equation. In contrast, the hand to mouth agents consumes their current income every period. In this framework, the consumption of savers decreases directly after an increase in the interest rate and the consumption of hand to mouth agents does too after indirect effects decrease their disposable income. Looking at the evidence presented in \cite{Holm2021transmission} however, reality seems to be a little more complicated. In that paper, the authors show that, at least in the short run, some households increase consumption after an unexpected increase in the interest rate. Moreover, this only concerns households above a threshold in liquid wealth. Corollary \ref{cor:threshold_unconstr} shows that the unconstrained model can replicate that empirical fact.  

This doesn't mean however that the unconstrained model with non-zero initial wealth is the end of the story as it possesses some counterfactual predictions. Given the decreasing path of consumption under the impatience condition $\rho>r$ and the absence of a borrowing constraint, consumption will eventually become negative in finite time.\footnote{Indeed, for a CARA utility function the (lower) \cite{Inada1963two} condition fails given that $u^{'}(0)=1\neq \infty$.} 

This consideration establishes that, while analytically tractable, the unconstrained case may not be a useful benchmark to think about how consumption reacts to changes in the real rate of interest. This provides a motivation to explore the effect of interest rates on consumption in the model where both the borrowing constraint and the non-negativity of consumption are taken seriously. The main point will be to study whether there also exists a threshold of initial wealth above which the MPCr is strictly positive and how that threshold compares with the one obtained in Corollary \ref{cor:threshold_unconstr}. 

\subsection{Consumption and interest rates with a borrowing constraint}

I now move on to a framework where the borrowing constraint is such that $\abar>-y/r$ so that consumption never becomes negative. With this in mind, the natural question that arises is once again: what is the MPCr? Equation \eqref{eq:cstar_W0} suggests that there still exists a threshold of initial assets above which consumption increases after an increase in the interest rate. It turns out that equation \eqref{eq:cstar_W0} lends itself to an intuitive decomposition for the reaction of consumption. I describe this in the following proposition.

\begin{prop}
\label{prop:dcdr_decomp}
Assume that the optimal consumption function is given by equation \eqref{eq:cstar_W0}. Then the MPCr can be written as:
\begin{align}
\label{eq:dcdr_constr}
\frac{\p c^\ast(a,y;r,\abar)}{\p r} = a - \rho\frac{1+w(a)}{\gamma r^2} -\frac{w(a)}{1+w(a)}\frac{2\rho-r}{\rho-r}(a-\abar)> \frac{\p c_u^*(a,y;r)}{\p r},    
\end{align}
where $w(a)\defeq W_0\left(f(a)\right)$ as before and $c_u^*(a,y;r)$ is the unconstrained optimal consumption function given by equation \eqref{eq:cstar_unconst}.
\end{prop}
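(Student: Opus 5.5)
The plan is to differentiate the closed form \eqref{eq:cstar_W0} of Theorem \ref{thm:cons_func} with respect to $r$, holding $a$, $y$ and $\abar$ fixed. Then I subtract the unconstrained derivative from Corollary \ref{cor:threshold_unconstr} and sign the difference using $w(a)\in(-1,0)$.

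First I differentiate the parts of $c^\ast$ that do not involve $w$. The term $ra+y$ contributes $a$. Since $\frac{\rho-r}{\gamma r}=\frac{\rho}{\gamma r}-\frac1\gamma$, its $r$-derivative is $-\frac{\rho}{\gamma r^2}$, so the prefactor contributes $-\rho\frac{1+w(a)}{\gamma r^2}$. What remains is $\frac{\rho-r}{\gamma r}\,\partial_r w(a)$.

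Next I compute $\partial_r w$ with the standard derivative of the Lambert function, $W_0'(x)=\frac{W_0(x)}{x(1+W_0(x))}$ for $x\in(-1/e,0)$. Because $f(a)\in(-1/e,0)$ for $a>\abar$, this is legitimate. Write $f=-e^{-(1+k)}$ with $k\defeq (a-\abar)\frac{\gamma r^2}{\rho-r}$. Then $\partial_r f=-f\,\partial_r k$, and the quotient rule gives
\[
\partial_r k=(a-\abar)\gamma\frac{2r(\rho-r)+r^2}{(\rho-r)^2}=(a-\abar)\gamma\frac{r(2\rho-r)}{(\rho-r)^2}.
\]
The factor $f$ cancels, so
\[
\partial_r w=-\frac{w}{1+w}\,(a-\abar)\frac{\gamma r(2\rho-r)}{(\rho-r)^2}.
\]
Multiplying by $\frac{\rho-r}{\gamma r}$ gives exactly the third term of \eqref{eq:dcdr_constr}, which establishes the equality.

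For the inequality, subtract $\partial_r c_u^\ast=a-\frac{\rho}{\gamma r^2}$. The difference is
\[
-\frac{\rho\, w(a)}{\gamma r^2}-\frac{w(a)}{1+w(a)}\frac{2\rho-r}{\rho-r}(a-\abar).
\]
By Theorem \ref{thm:cons_func}, $w(a)\in(-1,0)$ for $a>\abar$, so $-w>0$ and $1+w>0$. Under Assumption \ref{ass:parameters} we have $\rho>r$, and $r>0$ is implicit in the condition $\abar>-y/r$. Hence $\rho>0$, $2\rho-r>0$, $\rho-r>0$ and $a-\abar>0$, so both terms are strictly positive and the strict inequality follows.

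The only delicate step is the derivative of $W_0$ and the bookkeeping of the $r$-dependence inside $f$, in particular getting $2\rho-r$ from the quotient rule. Everything else is a direct sign check.
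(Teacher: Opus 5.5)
Your proof is correct and follows the paper's route. Like the paper, you differentiate the closed form \eqref{eq:cstar_W0} term by term, then obtain $\partial_r w(a)=-\frac{w(a)}{1+w(a)}\,(a-\abar)\frac{\gamma r(2\rho-r)}{(\rho-r)^2}$ from the Lambert-W derivative and the chain rule through $f$ and its exponent. You also go slightly further than the paper's appendix, which stops at the equality: you prove the strict inequality explicitly by signing the difference $-\frac{\rho w(a)}{\gamma r^2}-\frac{w(a)}{1+w(a)}\frac{2\rho-r}{\rho-r}(a-\abar)>0$, whereas the paper only argues it informally in the text (the dampening factor $1+w(a)\in(0,1)$ and the positive third term).
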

\begin{proof}
See Appendix \ref{sec:app_dcdr_decomp}. 
\end{proof}
This expression is the explicit, constant-income version of the implicit one presented in the proof of Proposition 5 in \cite{Achdou2022income}, who work with a CRRA utility function with an intertemporal elasticity of substitution restricted to be lower than 1. In their case, consumption necessarily decreases after an increase in interest rates. In my case, this doesn't necessarily happens and the three terms on the right hand side help to understand why. I outline their interpretation in sequence.

The first term on the right hand side of equation \eqref{eq:dcdr_constr} is the same as before: the more assets one has, the higher the income effect coming from a higher interest rate. The second term on the right hand side is also similar to its counterpart without borrowing constraint, except that it is multiplied by $1+w(a)\in (0,1)$. As a result, the inter-temporal substitution effect from before is dampened in the presence of a borrowing constraint. 

The third term on the right hand side is new and arises only when the household has initial assets strictly above the constraint. First, given that $w(a)\in (-1,0)$, note first that this effect is necessary positive. Second, notice that this effect is stronger the further away the household is from the constraint in terms of assets. This term captures the fact that a higher interest rate increases the time to depletion. As a result, the household can enjoy more consumption in the present.

The strict inequality in equation \eqref{eq:dcdr_constr} states that the MPC out of an increase in $r$ is \textit{always} greater with the constraint than without. This doesn't mean that the MPC with a constraint is larger in magnitude, but rather that it is larger in level. As a result, whenever both are negative then the MPC with a constraint will be larger (less negative). Conversely, whenever both are positive, the MPC with the constraint will be larger in magnitude. These two features are represented in Figure \ref{fig:MPCs}, which plots the two MPCs as a function of initial assets.

\begin{figure}[ht]
	\centering
	\bigskip
	\caption{MPCs with and without borrowing constraint}

	{\small
\includegraphics[width=.75\textwidth]{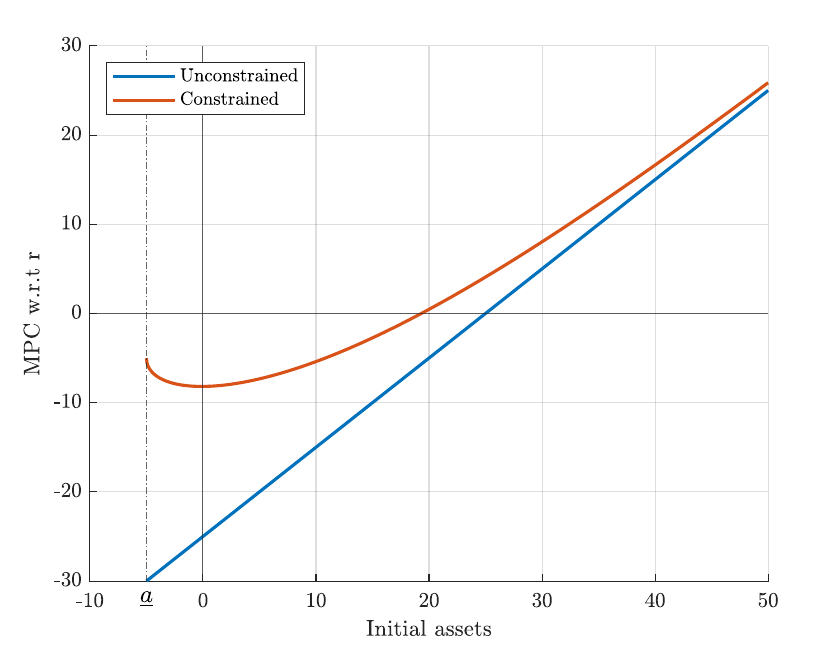}
	}
	\begin{minipage}{0.5\linewidth}
	{
 	 \footnotesize\emph{Notes: For this figure, I use the following parameter values: $r=0.04, \gamma = 2, \rho = 0.08$, $\abar=-5$ and $y=2$.} 
	 }
	\end{minipage}
    \label{fig:MPCs}
\end{figure}

All in all, there are two modifications compared to the unconstrained case: the first one dampens the negative effect on consumption and the second one adds a mechanism that has a positive effect on consumption. As a result, it naturally follows that consumption crowding in after an increase in the interest rate is more likely in that setup. This is illustrated graphically in Figure \ref{fig:MPCs}, where the MPC with the borrowing constraint turns positive for a significantly lower level of initial assets compared to the unconstrained case. I formalize this in the following proposition. 

\begin{prop}
\label{prop:ac_au}
Assume that the optimal consumption function is given by equation \eqref{eq:cstar_W0}. Then the effect of a change in the interest rate on consumption is strictly positive if and only if $a>\overline{a}^c>0$ and is negative below $\overline{a}^c$. Furthermore, the threshold is such that $\overline{a}^c<\overline{a}^u$.
\end{prop}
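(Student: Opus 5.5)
The plan is to change variables from assets $a$ to $w=w(a)=W_0(f(a))\in(-1,0)$. Since $f$ is strictly increasing in $a$ and $W_0$ is strictly increasing on $(-1/e,0)$, the map $a\mapsto w$ is a strictly increasing bijection from $(\abar,\infty)$ onto $(-1,0)$. The defining identity $we^{w}=f(a)$ gives, after taking logs,
\begin{align*}
x \defeq (a-\abar)\frac{\gamma r^2}{\rho-r} = -1-w-\ln(-w) \defeq h(w)\geq 0 .
\end{align*}
Substituting $a-\abar=\frac{\rho-r}{\gamma r^2}h(w)$ into equation \eqref{eq:dcdr_constr} from Proposition \ref{prop:dcdr_decomp} writes the MPCr as $g(a)=a-\frac{1}{\gamma r^2}\Phi(w)$, where
\begin{align*}
\Phi(w)\defeq \rho(1+w)+\frac{w}{1+w}(2\rho-r)h(w).
\end{align*}
Everything then reduces to elementary properties of $\Phi$ on $(-1,0)$, using $h(w)\sim (1+w)^2/2$ as $w\to-1$ and $h(w)\sim -\ln(-w)$ as $w\to 0$.

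\textbf{Step 1: signs at the ends.} For $a\geq \overline{a}^u$, the strict inequality in Proposition \ref{prop:dcdr_decomp} together with Corollary \ref{cor:threshold_unconstr} gives $g(a)>\partial c_u^\ast/\partial r\geq 0$. This immediately yields that any threshold must lie strictly below $\overline{a}^u$.

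For $a\leq 0$, it suffices to show $\Phi(w)>0$ on all of $(-1,0)$, because then $g(a)\leq -\Phi(w)/(\gamma r^2)<0$. Multiplying by $1+w>0$, this is the inequality
\begin{align*}
\rho(1+w)^2 > -w(2\rho-r)h(w).
\end{align*}
Because $2\rho-r<2\rho$, it is enough to prove $\psi(w)\defeq(1+w)^2+2w\,h(w)\geq 0$ on $(-1,0)$. Near $w=0$ this is clear, since $w\ln(-w)\to 0$. Near $w=-1$ both pieces are of order $s^2$ with $s=1+w$, so I will expand to third order: $h=s^2/2+s^3/3+O(s^4)$ and $2wh=-s^2+s^3/3+\dots$, giving $\psi\approx s^3/3>0$. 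To cover the whole interval I will then show $\psi$ is increasing, by differentiating twice and using $h'(w)=-1-1/w$. Together these give $g<0$ for all $a\leq 0$, hence $\overline{a}^c>0$.

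\textbf{Step 2: a unique crossing.} By continuity, $g$ has a zero in $(0,\overline{a}^u)$. For uniqueness I will show single crossing, namely $g'(a)>0$ whenever $g(a)=0$, and in fact aim for $g'>0$ on $(0,\infty)$. Using $dw/da=\frac{\gamma r^2}{\rho-r}\frac{w}{1+w}\cdot\frac{1}{w}\cdot\ldots$, i.e. $dw/da=\frac{\gamma r^2}{\rho-r}\cdot\frac{-w}{1+w}\cdot\frac{1}{h'(w)\cdot\ldots}$, which reduces to $dx/dw=h'(w)=-(1+w)/w$, I get
\begin{align*}
g'(a)=1-\frac{1}{\rho-r}\,\Phi'(w)\,\frac{-w}{1+w}.
\end{align*}
So it suffices to bound $\Phi'(w)\cdot\frac{-w}{1+w}<\rho-r$, or to check this on the region where $a>0$. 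If global monotonicity fails, I will fall back on showing that $\Phi$ is concave or has a single maximum. In that case the equation $\gamma r^2 a=\Phi(w(a))$, a line against a curve in transformed coordinates, can have at most one root on $a>0$.

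I expect the main obstacles to be the global inequality $\psi\geq 0$ in Step 1, which is tight to second order at $w=-1$, and the derivative bound for uniqueness in Step 2. Both are one-variable inequalities in $w$, which I will attack by repeated differentiation and the series expansions of $h$.
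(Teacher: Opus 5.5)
\textbf{The gap is the uniqueness argument in Step 2.} Your main target, $g'>0$ on $(0,\infty)$, is false in general. You have $g'(a)=G(w(a))$ with $G(w)=\frac{1}{1+w}+\beta\frac{w\,h(w)}{(1+w)^3}$, where $\beta=\frac{2\rho-r}{\rho-r}>2$. This $G$ does not depend on $\abar$, and it behaves like $(1-\beta/2)/(1+w)\to-\infty$ as $w\to-1^+$. Since $w(0)\to-1$ as $\abar\to 0^-$, $g'$ is negative on an interval of strictly positive assets whenever $\abar$ is close to zero. For example, with $r=0.04$, $\rho=0.08$, $\gamma=2$, $\abar=-1$, you get $g'<0$ on roughly $(0,3.9)$.

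The fallback also fails as stated. You have $\Phi'(w)=-(\rho-r)+(2\rho-r)h(w)/(1+w)^2\geq r/2$, because $h(w)\geq (1+w)^2/2$, and $\Phi'\to+\infty$ as $w\to 0^-$. So $\Phi$ is strictly increasing, has no interior maximum, and is not concave. In any case a line can cut a single-peaked curve more than once. Your weaker target, $g'>0$ at every zero of $g$, is true. But proving it for every $\abar<0$ is equivalent to the lemma below, which you do not supply. Existence of a zero in $(0,\overline{a}^u)$ and the bound $\overline{a}^c<\overline{a}^u$ are fine.

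\textbf{The missing idea is the core of the paper's proof: $g'$ changes sign exactly once on $(\abar,\infty)$.}
\begin{itemize}
\item Write $w=-e^{-x}$. Then $\mathrm{sign}\,g'=\mathrm{sign}\,P(x)$ with $P(x)=e^x+e^{-x}-2-\beta(e^{-x}-1+x)$.
\item One has $P(0)=P'(0)=0$ and $e^xP'(x)=(e^x-1)(e^x+1-\beta)$. Hence $P$ is negative up to a unique $x^\ast>\ln(\beta-1)$ and positive afterwards.
\item So $g$ first decreases and then increases to $+\infty$. Since $g<0$ near $\abar$, its zero is unique and lies on the increasing branch.
\end{itemize}
In your own variables, the right object is $\Psi(w)=\Phi(w)-(\rho-r)h(w)$, not $\Phi$. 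Zeros of $g$ are exactly the solutions of $\Psi(w)=\gamma r^2\abar<0$, and $\mathrm{sign}\,g'=-\mathrm{sign}\,\Psi'$. What you need is that $\Psi$ is single-peaked: it rises from $\Psi(-1^+)=0$ and then falls to $-\infty$.

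\textbf{Your Step 1 is correct.} With $u=-w$, $\psi=1-u^2+2u\ln u$ has derivative $2(1-u+\ln u)<0$ on $(0,1)$ and $\psi(1)=0$. It is in fact the sound way to get $\overline{a}^c>0$. The paper instead asserts that at $a=0$ the term $\frac{w}{1+w}\frac{2\rho-r}{\rho-r}\abar$ is non-positive, but it is positive. Your bound $\Phi>0$ is what actually closes that step.
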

\begin{proof}
See Appendix \ref{sec:app_ac_au}.   
\end{proof}
Proposition \ref{prop:ac_au} states that there exists a unique threshold of initial assets $\overline{a}^c$ above which the household will increase consumption if the interest rate increases. Furthermore, that threshold is strictly positive so that any initial asset position that is negative gives a decrease in consumption after an increase in $r$. This is consistent with the empirical findings in \cite{Di2017interest}, \cite{Floden2021household} as well as \cite{Cloyne2020monetary} and \cite{Foulis2026interest}.

In addition, the proposition states that this threshold is guaranteed to be \textit{lower} than the one in the case where the household doesn't face a borrowing constraint. It follows that the presence of a borrowing constraint makes it more likely that the MPCr is positive. 

Another notable feature of Figure \ref{fig:MPCs} is that while the MPCr for the unconstrained case is strictly decreasing as initial assets decrease, this is not the case for the constrained MPCr. This is because there are two competing effects at play in the presence of a borrowing constraint. To get a grasp of these two competing effects, first note that the cross-derivative with respect to a change in initial assets and the interest rates doesn't depend on the order in which one takes the partial derivative. This means that one can start from the partial derivative with respect to $a$ from Proposition \ref{prop:JH} and then consider the partial derivative with respect to $r$. 

In a nutshell, one would want to understand why $\p \left[r/(1+w(a))\right]/\p r$
is decreasing at first before turning increasing after, where $r/(1+w(a))$ is the MPC with respect to initial assets derived in Proposition \ref{prop:JH}. When initial assets are close to $\abar$, $w(a)$ is close to $-1$ and thus the MPC with respect to $a$ is rather large: an increase in $r$ will have a relatively large impact by directly increasing income. This effect is proportional to $1/(1+w(a))$ where $w(a)$ is close to $-1$ as $a$ is close to $\abar$. As initial assets increase, $w(a)$ drifts towards 0 and this positive effect becomes less prominent. At the same time, increasing $r$ increases the time to depletion as the initial stock of assets can now be sustained for longer. This positive effect on the MPC with respect to $a$ is all the more stronger the higher the initial assets, \textit{i.e} the greater $a-\abar$. As long as initial assets are such that $a>\tilde{a}$, where $\tilde{a}$ denotes the inflexion point, the second effect dominates and the MPC out of an increase in the interest rate is increasing in initial assets. 

There is an obvious feature that has been a fixture in the literature on heterogeneous agents models and which is missing from the model at the moment: uncertainty. Unless in a rare few cases, it is generally impossible to solve a heterogeneous agent model from the \cite{Bewley1977permanent}-\cite{Huggett1993risk}-\cite{Aiyagari1994uninsured} tradition in closed form.\footnote{See \cite{Toda2017huggett} for a closed form solution modulo a transcendental equation for the equilibrium interest rate in an environment with a CARA utility function but without borrowing constraints. In that case, the fact that consumption is linear in wealth facilitates the aggregation.} This is all the more difficult in the presence of a borrowing constraint as in the current framework. Still, one would like to know whether the presence of, say, income uncertainty increases or decreases the likelihood of the MPCr to turn positive. In addition, one would also like to know whether the result that consumption potentially increases after an increase in the interest rate also happens with a more standard CRRA utility function.

In order to tackle this question, I develop two extensions in the next section. The first one is conceptually close to the one used in \cite{Kimball1990precautionary}.\footnote{Another related paper is \cite{Blanchard1988consumption}, where the authors study a second-order approximation of the Euler equation around the case of no risk. However, they obtain a relationship with two endogenous variables while I obtain results on the equilibrium consumption function.} More precisely, I assume that income $y_t$ follows a two-state Poisson process with values $y-\sigma$ and $y+\sigma$. Crucially, I will derive expressions that are valid in the case where $\sigma$ is close to zero. In other words, I will be working with a model in which risk is arbitrarily low. In the second one, I revert back to constant income and allow for CRRA preferences. 

\section{Extensions}
\label{sec:extensions}
\subsection{Extension $\#$ 1: Uncertainty}
\label{sec:uncertainty}

I now assume that, instead of being constant, income follows a two-state Poisson process with a transition rate given by $\lambda$. That is, income oscillates between two values $y^+=y+\sigma$ and $y^-=y-\sigma$, where $\sigma$ is positive and assumed to be small. Regardless of what state income is currently in, the probability to switch during a small interval $dt$ is given by $\lambda \cdot dt$. This ensures that the income process is symmetric, which is going to help maintaining tractability. Let $s\in\left\{-,+\right\}$ denote the current state of the income process. At each point in time, the value function takes in two arguments: initial assets $a$ and the income state $s$. In order to facilitate notation, I define $V^+(a)\defeq V(a,+)$ and $V^-(a)\defeq V(a,-)$. With this notation in mind, the model boils down to a system of two Hamilton-Jacobi-Bellman (HJB) equations:
\begin{align*}
\rho V^+(a) &= \max_{c\geq 0}\left\{u(c)+(V^+(a))^{'}(ra+y^+-c)+\lambda(V^-(a) - V^+(a))\right\}\\    
\rho V^-(a) &= \max_{c\geq 0}\left\{u(c)+(V^-(a))^{'}(ra+y^--c)+\lambda(V^+(a) - V^-(a))\right\}
\end{align*}
where $(V^s(a))^{'}=\p V(a,s)/\p a$ for $s\in\left\{-,+\right\}$, which is coupled with the borrowing constraint $a\geq \abar$ where $\abar> -y^-/r$. This guarantees that consumption is positive even if the household is stuck at the low state forever. Conditional on being in the high state, the budget constraint is now written as $da/dt= ra_t+y^+-c_t$
and vice versa for the low state. These two HJB equations are derived in the online Appendix. In order to get a closed form expression for the consumption under stochastic income, I look for a solution that is valid for an arbitrarily small $\sigma$. This will require the following assumption:
\begin{Ass}
\label{ass:analytic}
The value functions $V^+(a)$ and $V^-(a)$ are analytic in $\sigma$ in a neighborhood of $\sigma=0$ uniformly in $a$ for $a\in(\abar,\infty)$.  
\end{Ass}
Under Assumption \ref{ass:analytic}, I look for consumption functions that take the form $c^\ast(a,y;r,\abar,+) = c^\ast(a,y;r,\abar) + \sigma\cdot c_1(a,y;r,\abar) + O(\sigma^2)$ for the high income state and $c^\ast(a,y;r,\abar,-) = c^\ast(a,y;r,\abar) - \sigma\cdot c_1(a,y;r,\abar) + O(\sigma^2)$ for the low income state, where $O(\sigma^2)$ denotes term of second order. I show in the online Appendix that these two state-contingent consumption functions can be characterized as follows:
\begin{prop}
\label{prop:cons_func_sto}
Assume that Assumptions \ref{ass:parameters} and  \ref{ass:analytic} hold. Then the state-contingent consumption functions are given by:  
\begin{align}
\label{eq:cons_func_sto_+}
c^\ast(a,y;r,\abar,+) &= ra+y+\frac{\rho-r}{\gamma r}(1+w(a))
+ \sigma c_1(a,y;r,\abar) + O(\sigma^2)\\
c^\ast(a,y;r,\abar,-) &= ra+y+\frac{\rho-r}{\gamma r}(1+w(a)) -\sigma c_1(a,y;r,\abar) + O(\sigma^2),    
\label{eq:cons_func_sto_-}
\end{align}
for all $a\in(\abar,\infty)$ and where I have defined:
\begin{align*}
c_1(a,y;r,\abar) = \frac{r}{1+w(a)}\frac{1-(-w(a))^\nu}{r+2\lambda}, \quad \nu = 1+2\lambda/r
\end{align*}
\end{prop}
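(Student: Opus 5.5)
The plan is to derive a first-order (in $\sigma$) perturbation of the envelope conditions of the two HJB equations around the deterministic solution of Theorem \ref{thm:cons_func}. This gives a linear first-order ODE for $c_1$, which I then solve explicitly in the variable $w=w(a)$.

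\textbf{Step 1: the coupled ODEs.} The first order condition in state $s$ gives $(V^s)'(a)=u'(c^s)=e^{-\gamma c^s}$, so $(V^s)''=-\gamma (c^s)' e^{-\gamma c^s}$. Differentiating each HJB equation in $a$ (envelope theorem) gives
\begin{align*}
(\rho-r)(V^s)' = (V^s)''(ra+y^s-c^s)+\lambda\left((V^{-s})'-(V^s)'\right).
\end{align*}
Dividing by $e^{-\gamma c^s}$ yields
\begin{align*}
\rho-r=-\gamma (c^s)'(ra+y^s-c^s)+\lambda\left(e^{-\gamma(c^{-s}-c^s)}-1\right), \qquad s\in\{+,-\}.
\end{align*}
Assumption \ref{ass:analytic} justifies substituting the expansions $c^\pm=c_0\pm\sigma c_1+O(\sigma^2)$ and matching powers of $\sigma$. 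The symmetry of the Poisson process makes the $+$ and $-$ equations mirror images, so the first-order coefficients are $\pm c_1$ with a single $c_1$.

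\textbf{Step 2: order by order.} At order $\sigma^0$ the equation is exactly \eqref{eq:ODE}, so $c_0=c^\ast(a,y;r,\abar)$ from Theorem \ref{thm:cons_func}. At order $\sigma^1$ in the $+$ state, use $e^{-\gamma(c^--c^+)}=1+2\gamma\sigma c_1+O(\sigma^2)$ to get
\begin{align*}
c_1'\,(ra+y-c_0)=-c_0'\,(1-c_1)+2\lambda c_1 .
\end{align*}
From Theorem \ref{thm:cons_func} and Proposition \ref{prop:JH}, $ra+y-c_0=-\frac{\rho-r}{\gamma r}(1+w)$ and $c_0'=\frac{r}{1+w}$. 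To change variables, differentiate $we^{w}=f(a)$ to get $w'(a)=\frac{\gamma r^2}{\rho-r}\frac{w}{1+w}$ (note $f'/f=-\gamma r^2/(\rho-r)$, but $w$ increases toward $0$, so I will double-check the sign). With this, the ODE becomes a linear ODE in $w\in(-1,0)$ with coefficients rational in $w$:
\begin{align*}
-\frac{r w}{1+w}\cdot\frac{1+w}{r}\frac{dc_1}{dw}\cdots
\end{align*}
Schematically, it reads $r w \frac{dc_1}{dw}=\frac{r}{1+w}(1-c_1)-2\lambda c_1$, up to bookkeeping of the factor $(1+w)$. I will solve it with an integrating factor. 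The homogeneous solutions should be of the form $\frac{(-w)^{\nu}}{1+w}$ with $\nu=1+2\lambda/r$. A particular solution should be $\frac{r}{(r+2\lambda)(1+w)}$, which I will guess and verify.

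\textbf{Step 3: pinning down the constant.} The general solution is $c_1=\frac{r}{1+w}\cdot\frac{1-K(-w)^\nu}{r+2\lambda}$. Two boundary conditions fix $K=1$:
\begin{itemize}
\item As $a\to\abar$ ($w\to-1$), the household is constrained and consumes its income, $c^\pm=r\abar+y\pm\sigma$, so $c_1\to1$. With $K=1$, l'H\^opital gives $\frac{r}{r+2\lambda}\nu=1$. Any other $K$ makes $c_1$ blow up like $1/(1+w)$, which contradicts the bounded consumption at the constraint.
\item As $a\to\infty$ ($w\to0$), $c_1\to r/(r+2\lambda)$. This is the known unconstrained annuity value of a symmetric two-state income deviation, and it serves as a consistency check.
\end{itemize}

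\textbf{Main obstacle.} I expect the hardest part to be the change of variables to $w$ and getting the exponent $\nu$ exactly right, including the sign of $w'(a)$ and the factors of $(1+w)$. A second delicate point is justifying the boundary condition at $\abar$. The behaviour at the kink, where a state-constraint condition in the low state requires $ra+y^--c^-\ge0$, must be matched at first order. I would handle this by requiring regularity (finiteness) of $c_1$ at $w=-1$, which selects $K=1$ uniquely.
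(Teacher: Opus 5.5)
Your plan is the first-order perturbation in $\sigma$ that the paper describes: expansions $c^\pm=c_0\pm\sigma c_1$ under Assumption~\ref{ass:analytic}, substituted into the differentiated HJB equations. Your envelope equation, the first-order equation $c_1'(ra+y-c_0)=-c_0'(1-c_1)+2\lambda c_1$, the general solution $c_1=\frac{r}{r+2\lambda}\frac{1-K(-w)^\nu}{1+w}$, and the choice $K=1$ from boundedness of $c_1$ as $w\to-1$ are all correct.

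The sign you flagged is indeed wrong. The correct derivative is $w'(a)=-\frac{\gamma r^2}{\rho-r}\frac{w}{1+w}>0$, so the equation in $w$ is $rw\,\frac{dc_1}{dw}=-\frac{r}{1+w}(1-c_1)+2\lambda c_1$. Your homogeneous solution $(-w)^\nu/(1+w)$ and particular solution $\frac{r}{(r+2\lambda)(1+w)}$ satisfy this corrected equation, not the one you wrote (whose homogeneous solution would be $(1+w)(-w)^{-\nu}$).

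Also base the choice of $K$ on the regularity argument rather than on asserting that the high-income household is constrained at $\abar$, since that assertion would need its own justification.
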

\begin{proof}
See online Appendix.    
\end{proof}
Naturally, by construction these two boil down to the consumption function under certain income \eqref{eq:cstar_W0} as $\sigma\to 0$. With this in mind, the goal is to revisit the question that I explored under a fixed income: how likely is it that consumption increases after an increase in the real rate of interest? In particular, the goal is to understand how the presence of stochastic income influences that response. 

Without loss of generality, I focus on the consumption function conditional on being in the high state. In that case, the risk is that income goes down in the near future. Notice that one can rewrite it as $c^\ast(a,y;r,\abar,+) = c_u^\ast(a,y;r)+\sigma c_1(a,y;r,\abar) + O(\sigma^2)$, where the first term on the right hand side is the optimal consumption function under constant income $y$ derived earlier. It follows that the effect of an increase in the interest rate with stochastic income will be given by 
the effect described earlier plus an extra term that has to do with the fact that income is stochastic. More precisely, the effect of an increase in the real interest rate is:
\begin{align}
\label{eq:decomp_dcdr}
\frac{\p c^\ast(a,y;r,\abar,+)}{\p r} = a - \rho\frac{1+w(a)}{\gamma r^2} -\frac{w(a)}{1+w(a)}\frac{2\rho-r}{\rho-r}(a-\abar) + \sigma\frac{\p c_1(a,y;r,\abar)}{\p r}
\end{align}
where I have ignored second order terms. From Proposition \ref{prop:ac_au}, there exists a threshold $\overline{a}^c$ above which the first three terms on the right hand side of equation \eqref{eq:decomp_dcdr} switch from negative to positive. Whether or not the presence of stochastic income increases the likelihood of consumption being crowded in after an increase in the real interest rate crucially hinges on the sign of $\p c_1(a,y;r,\abar)/\p r$. It turns out that the sign of this partial derivative is actually ambiguous as a function of initial assets, just as before. Given the intricate nature of the dependence of this extra term on $r$, it is difficult to prove the existence of a threshold and I give asymptotic results for low and high initial assets in the following proposition:
\begin{prop}
\label{prop:dcdr_sto_+}
Assume that $\rho>r$, that Assumption  \ref{ass:analytic} holds and that $c_1(a,y;r,\abar)$ is defined as in Proposition \ref{prop:cons_func_sto} for all $a\in(\abar,\infty)$. It follows that 
\begin{align}
\label{eq:limit-a_low}
    \frac{\partial }{\partial r}c_1(a,y;r,\abar)
    =
    -\frac{\lambda}{2(\rho-r)}\sqrt{\frac{2\gamma(a-\underline{a})}{\rho-r}}
    + O(a-\underline{a})
    \xrightarrow[a\to\underline{a}^+]{} 0^-.
\end{align}
for initial assets close to the constraint. Further, it also follows that
\begin{align}
\label{eq:limit-a_high}
\frac{\p c_1(a,y;r,\abar)}{\p r }
&\xrightarrow[a\to+\infty]{} \frac{2\lambda}{(r+2\lambda)^2}>0
\end{align}
for high initial assets. 
\end{prop}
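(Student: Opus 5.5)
The plan is to work directly with the closed form of $c_1$ in Proposition \ref{prop:cons_func_sto}. I treat $u \defeq 1+w(a)$ as the basic small or large quantity and track how it depends on $r$ through the exponent of $f(a)$. The identity that does most of the work is $r\nu = r+2\lambda$, which lets me write
\begin{align*}
c_1 = \frac{1}{\nu}\,\frac{1-(1-u)^\nu}{u}.
\end{align*}
Set $\varepsilon \defeq (a-\abar)\gamma r^2/(\rho-r)$, so that $f(a) = -e^{-1-\varepsilon}$. The $r$-dependence of $c_1$ then enters only through $\nu$ and through $\varepsilon$ (via $w$).

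\textbf{Low assets.} I use the Puiseux expansion of $W_0$ at the branch point $-1/e$:
\begin{align*}
W_0\left(-e^{-1-\varepsilon}\right) = -1 + \sqrt{2\varepsilon} - \tfrac{2}{3}\varepsilon + O(\varepsilon^{3/2}).
\end{align*}
This is convergent in $p=\sqrt{2\varepsilon}$, and hence analytic jointly in $(p,r)$. It gives $u = r\sqrt{2\gamma(a-\abar)/(\rho-r)} + O(a-\abar)$.

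Next I expand $(1-u)^\nu = 1-\nu u + \tfrac{\nu(\nu-1)}{2}u^2 + O(u^3)$. This yields
\begin{align*}
c_1 = 1 - \frac{\nu-1}{2}u + O(u^2) = 1 - \frac{\lambda}{r}u + O(u^2),
\end{align*}
and therefore
\begin{align*}
c_1 = 1 - \lambda\sqrt{\frac{2\gamma(a-\abar)}{\rho-r}} + O(a-\abar).
\end{align*}
Differentiating in $r$, the leading term gives exactly $-\frac{\lambda}{2(\rho-r)}\sqrt{\frac{2\gamma(a-\abar)}{\rho-r}}$. This is negative and tends to $0^-$ as $a\to\abar^+$.

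\textbf{Main obstacle at low assets.} The delicate point is justifying term-by-term differentiation in $r$, i.e.\ that the remainder's $r$-derivative is still $O(a-\abar)$. I will handle this by noting that $c_1$ is an analytic function of $(p,r)$ near $p=0$, with $r$ in a compact subset of $(0,\rho)$. The series in $p$ can then be differentiated termwise, and each coefficient is smooth in $r$. Since $p^2\propto a-\abar$ with an $r$-smooth factor, the remainder's derivative is again $O(a-\abar)$.

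\textbf{High assets.} As $a\to\infty$, $\varepsilon\to\infty$, and $w(a)\sim f(a) = -e^{-1-\varepsilon}$ is exponentially small. I differentiate the exact expression. The $r$-derivative of $w$ is
\begin{align*}
\frac{\p w}{\p r} = -\frac{w}{1+w}\,\frac{\p\varepsilon}{\p r},
\end{align*}
which is of order $(a-\abar)\,e^{-\varepsilon}$ and so vanishes. The term $(-w)^\nu$ has derivative
\begin{align*}
(-w)^\nu\left[\frac{\p\nu}{\p r}\log(-w) + \frac{\nu}{w}\,\frac{\p w}{\p r}\right].
\end{align*}
Since $\nu>1$ and $\log(-w)\approx -1-\varepsilon$ grows only polynomially in $a$, this is also of order (polynomial in $a$)$\times e^{-\nu\varepsilon}\to 0$. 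Hence only the derivative of $r/(r+2\lambda)$ survives, giving
\begin{align*}
\frac{\p c_1}{\p r} \to \frac{(r+2\lambda)-r}{(r+2\lambda)^2} = \frac{2\lambda}{(r+2\lambda)^2} > 0.
\end{align*}
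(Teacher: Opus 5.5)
Your proposal is correct and follows essentially the route the statement's form points to. Near $\abar$ you expand $W_0$ at its branch point $-1/e$, so that $1+w(a)=\sqrt{2\varepsilon}+O(\varepsilon)$; as $a\to\infty$ you use the exponential decay of $w(a)$, so that only $\p_r\bigl[r/(r+2\lambda)\bigr]=2\lambda/(r+2\lambda)^2$ survives. Your rewriting $c_1=\nu^{-1}\bigl(1-(1-u)^\nu\bigr)/u$, together with joint analyticity in $(\sqrt{a-\abar},r)$, cleanly justifies differentiating the expansion in $r$ term by term.
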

\begin{proof}
See online Appendix.    
\end{proof}
The main take-away from Proposition \ref{prop:dcdr_sto_+} is that, in a similar fashion as before with the introduction of a borrowing constraint, the presence of downward income risk adds another channel that increases the possibility of consumption crowding in for high enough assets. Note also that this mechanism plays an opposite role for low enough initial assets. 

In order to understand the intuition behind these results, note that there are two main mechanisms at play: $(i)$ if the household has positive initial assets, an increase in interest rates means that the real value of these precautionary savings has gone up. As a result, for a given probability to fall in the low income state the precautionary motive decreases and the household spends more as a result. At the same time $(ii)$ the present value of human wealth conditional on being in the low income state is $(y-\sigma)/r$. Therefore, an increase in interest rates diminishes the value of human wealth if the household gets a negative income shock. This increases the incentive for precautionary savings and thus to consume less. 

In this context, the main message of Proposition \ref{prop:dcdr_sto_+} is that effect $(i)$ dominates for households with large initial assets. At the same time, effect $(ii)$ dominates for households with low (and potentially negative) initial assets. This is further evidenced by the fact that the MPC out of an increase in the interest rate is increasing linearly with $\lambda$ for low initial assets. This is intuitive: the larger the possibility to fall in the low income state, the stronger the precautionary savings motive and thus the larger the cutback on consumption.

\subsection{Extension $\#$ 2: CRRA Utility Function}
\label{sec:CRRA}

All of the results so far were obtained under the assumption that the utility function has constant absolute risk aversion. While convenient analytically, this specification has a number of shortcomings that make it non standard in income fluctuation problems and in macroeconomics as a whole. Even though the presence of a borrowing constraint alleviates some of these concerns, it still remains that the CARA assumption itself may not be the best description for a typical consumer's behavior. For example, an increase in wealth does not change an agent's absolute willingness to take a fixed dollar risk which may be counterfactual. 

In this context, in the spirit of \cite{Shimer2008liquidity}, I show that the analytically tractable results obtained with the CARA assumption also obtain with a more standard CRRA assumption.\footnote{Note that the CRRA assumption can be extended to the broader HARA family with a simple change of variables. It follows that the results presented in this section also hold for a more general class of utility functions. I zoom in on a CRRA specification as it is arguably the most commonly used.} That being said, the assumption of CRRA will make the income fluctuation problem less tractable which will prevent the study of the role of risky income. As a result, I maintain the assumption of constant income throughout this subsection. The utility function is now given by the usual $u(c) = c^{1-\gamma}/(1-\gamma)$ where $\gamma> 0$ and $u(c)=\ln(c)$ if $\gamma=1$. As before, using the first order conditions of the associated HJB equation with respect to $a$ and $c$ as well as the envelope condition, one can obtain the following differential equation governing the optimal consumption function:
\begin{align}
c(a) = ra + y + \frac{\rho-r}{\gamma }\frac{c(a)}{c'(a)}.
\label{eq:ODE_CRRA}
\end{align}
In sharp contrast with equation \eqref{eq:ODE}, this non-linear ODE cannot be solved explicitly. To circumvent this difficulty, one can follow \cite{Holm2018consumption} and derive an implicit equation for $c(a)$. This equation is described in the following proposition.

\begin{prop}
\label{prop:F_CRRA}
Assume that Assumption \ref{ass:parameters} holds. I define $c_{min}\defeq r\abar+y$, which is the minimum consumption level which is attained whenever $a=\abar$. Then the optimal consumption function $c(a)$ is the solution to the following equation: 
\begin{align}
F(c(a),a;r) = c(a) - (1+\alpha\theta)(ra+y)+\alpha\theta c_{min}\left(\frac{c_{min}}{c(a)}\right)^\kappa = 0
\end{align}
where $\alpha=1/\gamma$, $\theta=(\rho-r)/r$ and 
$\kappa=1/(\alpha\theta)=\gamma r/(\rho-r)$.
\end{prop}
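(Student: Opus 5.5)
The plan is to follow \cite{Holm2018consumption} and invert the consumption function. Instead of solving \eqref{eq:ODE_CRRA} for $c(a)$, I treat assets as a function of consumption, $a(c)$. This requires $c'(a)>0$ on $(\abar,\infty)$, which I take from the standard monotonicity of the CRRA consumption function, the analogue of Proposition \ref{prop:JH}. Using $a'(c)=1/c'(a)$, equation \eqref{eq:ODE_CRRA} becomes
\begin{align*}
c - (ra(c)+y) = \frac{\rho-r}{\gamma}\, c\, a'(c).
\end{align*}
The key change of variables is $x(c)\defeq r a(c)+y$, the disposable income associated with consumption level $c$. Since $x'(c)=r a'(c)$, the right-hand side equals $\alpha\theta\, c\, x'(c)$ with $\alpha=1/\gamma$ and $\theta=(\rho-r)/r$. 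The nonlinear ODE therefore becomes the \emph{linear} first-order (Euler-type) ODE
\begin{align*}
\alpha\theta\, c\, x'(c) + x(c) = c .
\end{align*}

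Next I solve this ODE explicitly. The homogeneous equation gives $x_h(c)=K c^{-\kappa}$ with $\kappa=1/(\alpha\theta)$. A particular solution linear in $c$ is $x_p(c)=c/(1+\alpha\theta)$, as direct substitution confirms. The general solution is therefore
\begin{align*}
x(c)=\frac{c}{1+\alpha\theta}+K c^{-\kappa}.
\end{align*}

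To pin down $K$, I use the boundary condition $c(\abar)=c_{min}=r\abar+y$, i.e.\ $x(c_{min})=c_{min}$. I justify it as in Theorem \ref{thm:cons_func}. The state constraint $a\geq\abar$ requires $r\abar+y-c(\abar)\geq 0$. Under impatience $\rho>r$, the consumption Euler equation implies that consumption is falling whenever the constraint is slack. So a household at $\abar$ has no incentive to accumulate, and the drift is exactly zero there. This yields
\begin{align*}
K=c_{min}^{1+\kappa}\,\frac{\alpha\theta}{1+\alpha\theta}.
\end{align*}
Multiplying the general solution by $1+\alpha\theta$ and substituting $x=ra+y$ gives $(1+\alpha\theta)(ra+y)=c+\alpha\theta c_{min}(c_{min}/c)^{\kappa}$, which is exactly $F(c(a),a;r)=0$.

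Finally, I check that this implicit equation defines $c(a)$ uniquely on the relevant range. Differentiating,
\begin{align*}
\partial F/\partial c = 1-(c_{min}/c)^{\kappa+1},
\end{align*}
which is strictly positive for $c>c_{min}$. Hence, for each $a>\abar$, there is at most one root above $c_{min}$. Existence of a root follows because $F<0$ at $c=c_{min}$ whenever $a>\abar$, and $F\to+\infty$ as $c\to\infty$. The implicit function theorem then also gives $c'(a)>0$, consistent with the inversion step.

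I expect the main obstacle to be the boundary condition, together with the legitimacy of the inversion. One must argue that the optimal policy saves exactly zero at $\abar$, rather than merely weakly less than zero. One must also argue that no other solution branch of the ODE (for example one with $K$ chosen differently, matching only the asymptotic linear behaviour) is optimal. I would handle this by citing the state-constraint boundary condition for the HJB equation used in Theorem \ref{thm:cons_func} and in \cite{Achdou2022income}, namely $V'(\abar)\geq u'(r\abar+y)$, combined with the impatience argument above.
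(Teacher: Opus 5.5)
Your proof is correct, but it reaches $F=0$ by a different route from the paper's.

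The paper stays in the $a$-domain. It rewrites the ODE in terms of savings $s(a)$, passes to the savings rate $\check{s}(a)=s(a)/(ra+y)$ and the variable $\tau=\ln(ra+y)$, and obtains the separable equation $\check{s}\,d\check{s}/[(1-\check{s})(\check{s}+\alpha\theta)]=d\tau$. It then integrates this by partial fractions from $\check{s}(\abar)=0$, exponentiates and rearranges.

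You instead invert to $a(c)$, and your observation that $x(c)=ra(c)+y$ obeys the linear Euler-type equation $\alpha\theta\,c\,x'(c)+x(c)=c$ does all the work. The solution family $x=c/(1+\alpha\theta)+Kc^{-\kappa}$ is immediate, and the boundary condition fixes $K$ in one line. No logarithms need their arguments sign-checked, and uniqueness of the solution given the boundary condition is transparent.

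The a priori input is essentially the same in both routes. You need $c'(a)>0$ to invert; the paper needs $s(a)\neq 0$ to divide by it. Through the ODE these are the same condition when $c>0$. Both proofs simply impose $c(\abar)=c_{min}$; you at least give a heuristic justification for it, which the paper does not.

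Your final step adds something the paper's proof of this proposition omits. $F(\cdot,a)$ is strictly decreasing on $(0,c_{min})$, tends to $+\infty$ as $c\to 0^+$, and is negative at $c_{min}$ when $a>\abar$. So it also has a spurious root below $c_{min}$, which corresponds to positive saving, much like the $W_{-1}$ branch in the CARA case. Your restriction to the unique root above $c_{min}$ is exactly what selects the optimal policy.
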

\begin{proof}
See Appendix \ref{sec:app_F_CRRA}.  
\end{proof}

The main intuition for the approach is that implicit differentiation allows one to write that $\p c(a;r)/\p r = -F_r/F_c$
where $F_c$ denotes the partial derivative of $F$ with respect to $c(a;r)$ and, likewise, $F_r$ denotes the partial derivative of $F$ with respect to $r$. It turns out that $F_c$ will have a non-ambiguous sign so that whether or not consumption is crowded in after an increase in the interest rate only depends on the sign of $F_r$. The conditions for the crowding in are explained in the following theorem:
\begin{theorem}
\label{thm:dcdr_CRRA}
Assume that Assumption \ref{ass:parameters} holds. It follows that the optimal consumption function with a CRRA utility function is such that:
\begin{enumerate}
    \item $\p c^*(a;r)/\p r\leq 0$ if $\gamma\leq 1$
    \item If $\gamma>1$, then there exists $\overline{a}^{CRRA}\in (\abar,\infty)$ such that $\p c^*(a;r)/\p r> 0$ if and only if $a>\overline{a}^{CRRA}$.
\end{enumerate}
\end{theorem}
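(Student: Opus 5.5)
The plan is to work entirely through the implicit characterization of Proposition \ref{prop:F_CRRA}, writing $\p c^*(a;r)/\p r=-F_r/F_c$. I will then show that $F_c$ has a fixed sign, so that the sign of the MPCr is the sign of $-F_r$ evaluated along the solution curve.

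\emph{Step 1 (sign of $F_c$).} Differentiating $F$ in $c$ and using $\alpha\theta\kappa=1$ gives
\begin{align*}
F_c = 1-\left(\frac{c_{min}}{c}\right)^{\kappa+1}.
\end{align*}
Since $c^*(a)>c_{min}$ for $a>\abar$ (consumption is increasing in $a$ and equals $c_{min}$ at the constraint), $F_c>0$. Hence $\p c^*/\p r$ has the sign of $-F_r$.

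\emph{Step 2 (compute $F_r$ at fixed $c$ and $a$).} Write $x\defeq c_{min}/c\in(0,1)$ and use
\begin{align*}
\frac{d(\alpha\theta)}{dr}=-\frac{\alpha\rho}{r^2},\qquad \frac{dc_{min}}{dr}=\abar,\qquad \frac{d\kappa}{dr}=\frac{\kappa\rho}{r(\rho-r)}.
\end{align*}
These give
\begin{align*}
F_r=\frac{\alpha\rho}{r^2}\left[(ra+y)-c_{min}x^\kappa\right]-(1+\alpha\theta)a+\alpha\theta(1+\kappa)\abar\, x^\kappa+\frac{\rho}{r(\rho-r)}\,c_{min}\,x^\kappa\ln x .
\end{align*}
Next I eliminate $a$ using $F=0$, i.e. $ra+y=(c+\alpha\theta c_{min}x^\kappa)/(1+\alpha\theta)$. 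Since $c^*$ is strictly increasing in $a$, this turns $F_r$ into a function $G(x)$ of the single variable $x\in(0,1)$, where $x\to1$ corresponds to $a\to\abar$ and $x\to0$ to $a\to\infty$. The claims then reduce to sign properties of $G$ on $(0,1)$.

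\emph{Step 3 (boundary behaviour).} At $x=1$ (so $a=\abar$), the terms collapse to $-(1+\alpha\theta)\abar+(1+\alpha\theta)\abar=0$. This is consistent with $c^*(\abar;r)=r\abar+y$, whose $r$-derivative is $\abar<0$ at the constraint. I will therefore expand $G$ to first order in $1-x$ to show that $G>0$ just inside the constraint, so that consumption falls with $r$ near $\abar$.

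As $x\to0$ (i.e. $a\to\infty$), the terms $x^\kappa$ and $x^\kappa\ln x$ vanish, leaving
\begin{align*}
F_r\simeq a\left[\frac{\alpha\rho}{r}-1-\frac{\alpha(\rho-r)}{r}\right]+\frac{\alpha\rho y}{r^2}=(\alpha-1)a+\frac{\alpha\rho y}{r^2}.
\end{align*}
Thus $-F_r\sim(1-\alpha)a$. This is eventually positive iff $\alpha<1$, i.e. $\gamma>1$, and stays non-positive asymptotically when $\gamma\le1$. This asymptotic comparison, combined with Step 1, delivers existence of a crossing point for $\gamma>1$ by continuity.

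\emph{Step 4 (single crossing; the main obstacle).} The hard part is to show that $G$ changes sign exactly once when $\gamma>1$, and never becomes negative when $\gamma\le1$. My first attempt is to multiply $G$ by a positive factor such as $x^{-\kappa}$ or $c^{-1}$. The aim is to obtain an expression of the form $A/x^{\kappa+1}+B+C\ln x$ with explicit constants. Its derivative in $x$ is then a combination of $x^{-\kappa-2}$ and $x^{-1}$, which has at most one zero on $(0,1)$. Hence the rescaled $G$ is unimodal and has at most two zeros, one of which is at $x=1$. Together with the boundary signs from Step 3, this gives exactly one interior zero when $\gamma>1$, which defines $\overline{a}^{CRRA}$ via the monotone map $x\mapsto a$. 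When $\gamma\le1$ it gives no interior zero. If the $\abar x^\kappa$ term spoils this clean form, I will use $\abar=(c_{min}-y)/r$ to absorb it into the $c_{min}$ terms before rescaling.
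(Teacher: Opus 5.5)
Your Steps 1--3 are correct and match the paper's computation. After eliminating $a$, your $F_r$ equals $-(c_{min}/r)\,n(1/x)$, where $n$ is the paper's function. You are also right that it vanishes at $x=1$; the paper's appendix asserts $n(1)<0$, but in fact $n(1)=0$ and only $n'(1)<0$ matters.

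The gap is in Step 4, which carries the whole theorem: the three-term form you aim for does not exist. In your variable $x=c_{min}/c$,
\[
\frac{r}{c_{min}}F_r=-\frac{A}{x}+B+(A-B)x^{\kappa}+Cx^{\kappa}\ln x,\qquad A=\frac{\gamma-1}{\gamma+\theta},\quad B=\frac{(1+\alpha\theta)y}{c_{min}},\quad C=\frac{\rho}{\rho-r}.
\]
The $B$-terms come from income $y$, through $-(1+\alpha\theta)a$ and $(1+\alpha\theta)\abar x^{\kappa}$. Your substitution $\abar=(c_{min}-y)/r$ is precisely what generates them, so it cannot absorb them. Multiplying by $x^{-\kappa}$ gives $-Ax^{-\kappa-1}+Bx^{-\kappa}+(A-B)+C\ln x$. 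Its derivative involves $x^{-\kappa-2}$, $x^{-\kappa-1}$ and $x^{-1}$, and can a priori vanish twice; multiplying by $1/c$ fares no better. So ``unimodal, hence at most two zeros'' is not established. Without it you have neither uniqueness of $\overline{a}^{CRRA}$ nor the sign claim for $\gamma\le 1$; the boundary signs from Step 3 alone do not exclude extra crossings.

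The missing step is to differentiate once before rescaling. Differentiation kills the constant $B$, and one gets exactly the form you wanted:
\[
\frac{r}{c_{min}}\,x^{1-\kappa}\,G'(x)=\frac{A}{x^{\kappa+1}}+D+\kappa C\ln x=:\phi(x),\qquad D=C-\kappa(B-A),\quad \phi(1)=A+D=(1+\kappa)\frac{r\abar}{c_{min}}<0 .
\]
Here $\phi'(x)=-(\kappa+1)Ax^{-\kappa-2}+\kappa C/x$ has at most one zero, which gives the two cases.

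\emph{Case $\gamma\le 1$.} Then $A\le 0$, so $\phi$ is increasing and hence negative on $(0,1]$. So $G$ is strictly decreasing with $G(1)=0$, giving $F_r>0$ and $\p c^*/\p r<0$ for all $a>\abar$.

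\emph{Case $\gamma>1$.} Then $\phi(0^+)=+\infty$, $\phi(1)<0$, and $\phi$ is decreasing-then-increasing, so it changes sign exactly once, at some $x^\ast\in(0,1)$. Thus $G$ rises on $(0,x^\ast)$ and falls to $G(1)=0$, so $G>0$ on $[x^\ast,1)$. Combined with $G\to-\infty$ as $x\to 0$ (your Step 3), $G$ has exactly one zero, which maps to $\overline{a}^{CRRA}$.

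This is the paper's route in substance. It works with $n(c/c_{min})$, differentiates once, uses $n'(1)<0$, and locates $\{n'<0\}$ through the two branches of the Lambert W function applied to $(\nu u)e^{\nu u}$. The monotonicity of $\phi$ does the same job more elementarily.

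Alternatively, you can keep your rescaling. In that case $x^{\kappa+2}\frac{d}{dx}\bigl[x^{-\kappa}G\bigr]$ is proportional to the convex function $(\kappa+1)A-\kappa Bx+Cx^{\kappa+1}$. This function equals $(\kappa+1)A$ at $x=0$ and $A+D<0$ at $x=1$. Convexity then gives exactly one sign change if $\gamma>1$ and none if $\gamma\le 1$.
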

\begin{proof}
See Appendix \ref{sec:app_CRRA}.    
\end{proof}
The main take-away from Theorem \ref{thm:dcdr_CRRA} is that if the coefficient of relative risk aversion is less than 1 (equivalently, if the elasticity of intertemporal substitution (EIS) is above 1), then consumption is necessarily crowded out after an increase in the interest rate. This is because substitution effects dominate in this case as in \cite{Lehrer2018effect}. One implication of this result is that an income fluctuation problem with a CRRA utility function with EIS above 1 cannot replicate the evidence reported in \cite{Holm2021transmission}.

If $\gamma>1$ instead, substitution effects do not dominate and there is a scope for wealth effects to ensure that consumption is crowded in. As before, this happens when initial wealth crosses a certain threshold. This situation corresponds to an EIS that is strictly below 1 which seems to be the empirically relevant case\textemdash see \cite{Havranek2015measuring}. It follows that the result of a crowding in of consumption after an increase in interest rates for sufficiently high wealth isn't specific to the CARA assumption. 

\section{Conclusion}
\label{sec:conclusion}

In a recent contribution, \cite{Achdou2022income} put forward the benefits of continuous-time formulations of standard heterogeneous agents models. In particular, they show that in some special cases these are more amenable to closed-form solutions that yield valuable insights. In this context, the current paper makes the case that the assumption of a CARA utility function further improves tractability and doesn't have to come with undesirable properties such as negative consumption. In addition, I show that the Lambert W function is a useful tool to have as a theoretical macroeconomist. 

Leveraging these two features, I have derived an explicit mapping from initial assets, permanent income as well as real interest rates to optimal consumption in the presence of borrowing constraints. In turn, I have shown that there exists a threshold level of initial assets above which consumption increases after an increase in the interest rate because of wealth effects. Finally, I have shown that the result of a crowding in is not specific to the CARA assumption as it also happens with a more standard CRRA specification.

\bibliographystyle{apalike2}
\bibliography{CARA_dcdr}

@article{Acharya2020understanding,
  title={Understanding HANK: Insights from a PRANK},
  author={Acharya, Sushant and Dogra, Keshav},
  journal={Econometrica},
  volume={88},
  number={3},
  pages={1113--1158},
  year={2020},
  publisher={Wiley Online Library}
}

@article{Acharya2023optimal,
  title={Optimal monetary policy according to HANK},
  author={Acharya, Sushant and Challe, Edouard and Dogra, Keshav},
  journal={American Economic Review},
  volume={113},
  number={7},
  pages={1741--1782},
  year={2023},
  publisher={American Economic Association 2014 Broadway, Suite 305, Nashville, TN 37203}
}

@article{Achdou2022income,
  title={Income and wealth distribution in macroeconomics: A continuous-time approach},
  author={Achdou, Yves and Han, Jiequn and Lasry, Jean-Michel and Lions, Pierre-Louis and Moll, Benjamin},
  journal={The Review of Economic Studies},
  volume={89},
  number={1},
  pages={45--86},
  year={2022},
  publisher={Oxford University Press}
}

@article{Aiyagari1994uninsured,
  title={Uninsured idiosyncratic risk and aggregate saving},
  author={Aiyagari, S Rao},
  journal={The Quarterly Journal of Economics},
  volume={109},
  number={3},
  pages={659--684},
  year={1994},
  publisher={MIT Press}
}

@article{Auclert2019monetary,
  title={Monetary policy and the redistribution channel},
  author={Auclert, Adrien},
  journal={American Economic Review},
  volume={109},
  number={6},
  pages={2333--2367},
  year={2019},
  publisher={American Economic Association 2014 Broadway, Suite 305, Nashville, TN 37203}
}

@article{Bauer2023alternative,
  title={An alternative explanation for the "fed information effect"},
  author={Bauer, Michael D and Swanson, Eric T},
  journal={American Economic Review},
  volume={113},
  number={3},
  pages={664--700},
  year={2023},
  publisher={American Economic Association 2014 Broadway, Suite 305, Nashville, TN 37203}
}

@article{Benhabib2015wealth,
  title={The wealth distribution in Bewley economies with capital income risk},
  author={Benhabib, Jess and Bisin, Alberto and Zhu, Shenghao},
  journal={Journal of Economic Theory},
  volume={159},
  pages={489--515},
  year={2015},
  publisher={Elsevier}
}

@article{Bewley1977permanent,
  title={The permanent income hypothesis: A theoretical formulation},
  author={Bewley, Truman},
  journal={Journal of Economic Theory},
  volume={16},
  number={2},
  pages={252--292},
  year={1977},
  publisher={Elsevier}
}

@article{Bilbiie2008limited,
  title={Limited asset markets participation, monetary policy and (inverted) aggregate demand logic},
  author={Bilbiie, Florin O},
  journal={Journal of economic theory},
  volume={140},
  number={1},
  pages={162--196},
  year={2008},
  publisher={Elsevier}
}

@article{Bilbiie2020new,
  title={The new keynesian cross},
  author={Bilbiie, Florin O},
  journal={Journal of Monetary Economics},
  volume={114},
  pages={90--108},
  year={2020},
  publisher={Elsevier}
}

@article{Bilbiie2025monetary,
  title={Monetary policy and heterogeneity: An analytical framework},
  author={Bilbiie, Florin O},
  journal={Review of Economic Studies},
  volume={92},
  number={4},
  pages={2398--2436},
  year={2025},
  publisher={Oxford University Press UK}
}

@article{Bilbiie2025hanksson,
  title={Hanksson},
  author={Bilbiie, Florin O and Galaasen, Sigurd M{\o}lster and G{\"u}rkayna, RS and M{\ae}hlum, Mathis and Molnar, Krisztina},
  year={2025},
  publisher={Faculty of Economics, University of Cambridge}
}

@misc{Blanchard1988consumption,
  title={Consumption: Beyond certainty equivalence},
  author={Blanchard, Olivier J and Mankiw, N Gregory},
  year={1988},
  publisher={National Bureau of Economic Research Cambridge, Mass., USA}
}

@article{Caballero1990consumption,
  title={Consumption puzzles and precautionary savings},
  author={Caballero, Ricardo J},
  journal={Journal of Monetary Economics},
  volume={25},
  number={1},
  pages={113--136},
  year={1990},
  publisher={Elsevier}
}

@article{Carroll1996concavity,
  title={On the concavity of the consumption function},
  author={Carroll, Christopher D and Kimball, Miles S},
  journal={Econometrica: Journal of the Econometric Society},
  pages={981--992},
  year={1996},
  publisher={JSTOR}
}

@article{Clarida1999science,
  title={The science of monetary policy: a new Keynesian perspective},
  author={Clarida, Richard and Gali, Jordi and Gertler, Mark},
  journal={Journal of economic literature},
  volume={37},
  number={4},
  pages={1661--1707},
  year={1999},
  publisher={American Economic Association}
}

@article{Cloyne2020monetary,
  title={Monetary policy when households have debt: new evidence on the transmission mechanism},
  author={Cloyne, James and Ferreira, Clodomiro and Surico, Paolo},
  journal={The Review of Economic Studies},
  volume={87},
  number={1},
  pages={102--129},
  year={2020},
  publisher={Oxford University Press}
}

@article{Coibion2017innocent,
  title={Innocent Bystanders? Monetary policy and inequality},
  author={Coibion, Olivier and Gorodnichenko, Yuriy and Kueng, Lorenz and Silvia, John},
  journal={Journal of Monetary Economics},
  volume={88},
  pages={70--89},
  year={2017},
  publisher={Elsevier}
}

@techreport{Commault2025Permanent_income,
  title={Heterogeneity in MPC Beyond Liquidity Constraints: The Role of Permanent Earnings},
  author={Commault, Jeanne},
  institution={Sciences Po Economics Discussion Paper, hal-03870685},
  year={2025}
}

@misc{Deaton1989saving,
  title={Saving and liquidity constraints},
  author={Deaton, Angus},
  year={1989},
  publisher={national bureau of economic research Cambridge, Mass., USA}
}

@article{Debortoli2024idiosyncratic,
  title={Idiosyncratic income risk and aggregate fluctuations},
  author={Debortoli, Davide and Gal{\'\i}, Jordi},
  journal={American Economic Journal: Macroeconomics},
  volume={16},
  number={4},
  pages={279--310},
  year={2024},
  publisher={American Economic Association 2014 Broadway, Suite 305, Nashville, TN 37203-2425}
}

@article{Di2017interest,
  title={Interest rate pass-through: Mortgage rates, household consumption, and voluntary deleveraging},
  author={Di Maggio, Marco and Kermani, Amir and Keys, Benjamin J and Piskorski, Tomasz and Ramcharan, Rodney and Seru, Amit and Yao, Vincent},
  journal={American Economic Review},
  volume={107},
  number={11},
  pages={3550--3588},
  year={2017},
  publisher={American Economic Association 2014 Broadway, Suite 305, Nashville, TN 37203}
}

@article{Doepke2006inflation,
  title={Inflation and the redistribution of nominal wealth},
  author={Doepke, Matthias and Schneider, Martin},
  journal={Journal of Political Economy},
  volume={114},
  number={6},
  pages={1069--1097},
  year={2006},
  publisher={The University of Chicago Press}
}

@article{Eichenbaum2022state,
  title={State-dependent effects of monetary policy: The refinancing channel},
  author={Eichenbaum, Martin and Rebelo, Sergio and Wong, Arlene},
  journal={American Economic Review},
  volume={112},
  number={3},
  pages={721--761},
  year={2022},
  publisher={American Economic Association 2014 Broadway, Suite 305, Nashville, TN 37203}
}

@techreport{Farhi2022price,
  title={Price theory for incomplete markets},
  author={Farhi, Emmanuel and Olivi, Alan and Werning, Iv{\'a}n},
  year={2022},
  institution={National Bureau of Economic Research}
}

@book{Fisher1930theory,
  title={The Theory of Interest: As Determined by Impatience to Spend Income and Opportunity to Invest It},
  author={Fisher, Irving},
  year={1930},
  publisher={Macmillan},
  address={New York}
}

@article{Fischer2026concave,
  title={Concave consumption functions-A closed-form characterization},
  author={Fischer, Thomas},
  journal={Available at SSRN 4735352},
  year={2026}
}

@article{Floden2021household,
  title={Household debt and monetary policy: Revealing the cash-flow channel},
  author={Flod{\'e}n, Martin and Kilstr{\"o}m, Matilda and Sigurdsson, J{\'o}sef and Vestman, Roine},
  journal={The Economic Journal},
  volume={131},
  number={636},
  pages={1742--1771},
  year={2021},
  publisher={Oxford University Press}
}

@techreport{Foulis2026interest,
  title={How Do Interest Rates Affect Consumption? Household Debt and the Role of Asset Prices},
  author={Foulis, Angus K and Hazell, Jonathon and Mian, Atif R and Tracey, Belinda},
  year={2026},
  institution={National Bureau of Economic Research}
}

@article{Gertler2015monetary,
  title={Monetary policy surprises, credit costs, and economic activity},
  author={Gertler, Mark and Karadi, Peter},
  journal={American Economic Journal: Macroeconomics},
  volume={7},
  number={1},
  pages={44--76},
  year={2015},
  publisher={American Economic Association 2014 Broadway, Suite 305, Nashville, TN 37203-2425}
}

@article{Gouin2023pareto,
  title={Pareto extrapolation: An analytical framework for studying tail inequality},
  author={Gouin-Bonenfant, {\'E}milien and Toda, Alexis Akira},
  journal={Quantitative Economics},
  volume={14},
  number={1},
  pages={201--233},
  year={2023},
  publisher={Wiley Online Library}
}

@article{Hall1978stochastic,
  title={Stochastic implications of the life cycle-permanent income hypothesis: theory and evidence},
  author={Hall, Robert E},
  journal={Journal of Political Economy},
  volume={86},
  number={6},
  pages={971--987},
  year={1978},
  publisher={The University of Chicago Press}
}

@article{Havranek2015measuring,
  title={Measuring intertemporal substitution: The importance of method choices and selective reporting},
  author={Havr{\'a}nek, Tom{\'a}{\v{s}}},
  journal={Journal of the European Economic Association},
  volume={13},
  number={6},
  pages={1180--1204},
  year={2015},
  publisher={Oxford University Press}
}

@article{Helpman1981optimal,
  title={Optimal spending and money holdings in the presence of liquidity constraints},
  author={Helpman, Elhanan},
  journal={Econometrica: Journal of the Econometric Society},
  pages={1559--1570},
  year={1981},
  publisher={JSTOR}
}

@article{Holm2018consumption,
  title={Consumption with liquidity constraints: An analytical characterization},
  author={Holm, Martin Blomhoff},
  journal={Economics Letters},
  volume={167},
  pages={40--42},
  year={2018},
  publisher={Elsevier}
}

@article{Holm2021transmission,
  title={The transmission of monetary policy under the microscope},
  author={Holm, Martin Blomhoff and Paul, Pascal and Tischbirek, Andreas},
  journal={Journal of Political Economy},
  volume={129},
  number={10},
  pages={2861--2904},
  year={2021},
  publisher={The University of Chicago Press Chicago, IL}
}

@article{Huggett1993risk,
  title={The risk-free rate in heterogeneous-agent incomplete-insurance economies},
  author={Huggett, Mark},
  journal={Journal of Economic Dynamics and Control},
  volume={17},
  number={5-6},
  pages={953--969},
  year={1993},
  publisher={Elsevier}
}

@article{Inada1963two,
  title={On a two-sector model of economic growth: Comments and a generalization},
  author={Inada, Ken-Ichi},
  journal={The Review of Economic Studies},
  volume={30},
  number={2},
  pages={119--127},
  year={1963},
  publisher={Wiley-Blackwell}
}

@article{Kaplan2018monetary,
  title={Monetary policy according to HANK},
  author={Kaplan, Greg and Moll, Benjamin and Violante, Giovanni L},
  journal={American Economic Review},
  volume={108},
  number={3},
  pages={697--743},
  year={2018},
  publisher={American Economic Association 2014 Broadway, Suite 305, Nashville, TN 37203}
}

@book{Kase2025estimating,
  title={Estimating nonlinear heterogeneous agents models with neural networks},
  author={Kase, Hanno and Melosi, Leonardo and Rottner, Matthias},
  year={2025},
  publisher={Centre for Economic Policy Research}
}

@book{Keynes1936general,
  title={The General Theory of Employment, Interest, and Money},
  author={Keynes, John Maynard},
  year={1936},
  publisher={Macmillan},
  address={London}
}

@article{Kimball1990precautionary,
  title={Precautionary Saving in the Small and in the Large},
  author={Kimball, Miles S},
  journal={Econometrica: Journal of the Econometric Society},
  pages={53--73},
  year={1990},
  publisher={JSTOR}
}

@article{Kimball1989precautionary,
  title={Precautionary saving and the timing of taxes},
  author={Kimball, Miles S and Mankiw, N Gregory},
  journal={Journal of Political Economy},
  volume={97},
  number={4},
  pages={863--879},
  year={1989},
  publisher={The University of Chicago Press}
}

@article{Lee2025wealth,
  title={Wealth inequality, aggregate consumption, and macroeconomic trends under incomplete markets},
  author={Lee, Byoungchan},
  journal={American Economic Journal: Macroeconomics},
  volume={17},
  number={1},
  pages={414--442},
  year={2025},
  publisher={American Economic Association 2014 Broadway, Suite 305, Nashville, TN 37203-2425}
}

@article{Lehrer2018effect,
  title={The effect of interest rates on consumption in an income fluctuation problem},
  author={Lehrer, Ehud and Light, Bar},
  journal={Journal of Economic Dynamics and Control},
  volume={94},
  pages={63--71},
  year={2018},
  publisher={Elsevier}
}

@article{Ma2020income,
  title={The income fluctuation problem and the evolution of wealth},
  author={Ma, Qingyin and Stachurski, John and Toda, Alexis Akira},
  journal={Journal of Economic Theory},
  volume={187},
  pages={105003},
  year={2020},
  publisher={Elsevier}
}

@article{Ma2021theory,
  title={A theory of the saving rate of the rich},
  author={Ma, Qingyin and Toda, Alexis Akira},
  journal={Journal of Economic Theory},
  volume={192},
  pages={105193},
  year={2021},
  publisher={Elsevier}
}

@article{Ma2022asymptotic,
  title={Asymptotic linearity of consumption functions and computational efficiency},
  author={Ma, Qingyin and Toda, Alexis Akira},
  journal={Journal of Mathematical Economics},
  volume={98},
  pages={102562},
  year={2022},
  publisher={Elsevier}
}

@article{Ma2026optimal,
  title={Optimal Savings under Transition Uncertainty and Learning Dynamics},
  author={Ma, Qingyin and Zhang, Xinxin},
  journal={arXiv preprint arXiv:2603.08663},
  year={2026}
}

@article{Ma2025theory,
  title={A Theory of Saving under Risk Preference Dynamics},
  author={Ma, Qingyin and Song, Xinxi and Toda, Alexis Akira},
  journal={arXiv preprint arXiv:2511.03142},
  year={2025}
}

@article{Merton1971optimum,
  title={Optimum consumption and portfolio rules in a continuous-time model},
  author={Merton, Robert},
  journal={Journal of Economic Theory},
  volume={3},
  number={4},
  pages={373--413},
  year={1971},
  publisher={Elsevier}
}

@book{Mezo2022lambert,
  title={The Lambert W function: its generalizations and applications},
  author={Mezo, Istvan},
  year={2022},
  publisher={Chapman and Hall/CRC}
}

@article{Miranda2021transmission,
  title={The transmission of monetary policy shocks},
  author={Miranda-Agrippino, Silvia and Ricco, Giovanni},
  journal={American Economic Journal: Macroeconomics},
  volume={13},
  number={3},
  pages={74--107},
  year={2021},
  publisher={American Economic Association 2014 Broadway, Suite 305, Nashville, TN 37203-2425}
}

@article{Nakamura2018high,
  title={High-frequency identification of monetary non-neutrality: the information effect},
  author={Nakamura, Emi and Steinsson, J{\'o}n},
  journal={The Quarterly Journal of Economics},
  volume={133},
  number={3},
  pages={1283--1330},
  year={2018},
  publisher={Oxford University Press}
}

@article{Park2006analytical,
  title={An analytical solution to the inverse consumption function with liquidity constraints},
  author={Park, Myung-Ho},
  journal={Economics Letters},
  volume={92},
  number={3},
  pages={389--394},
  year={2006},
  publisher={Elsevier}
}

@article{Ramsey1928mathematical,
  title={A mathematical theory of saving},
  author={Ramsey, Frank Plumpton},
  journal={The Economic Journal},
  volume={38},
  number={152},
  pages={543--559},
  year={1928},
  publisher={Oxford University Press Oxford, UK}
}

@article{Romer2004new,
  title={A new measure of monetary shocks: Derivation and implications},
  author={Romer, Christina D and Romer, David H},
  journal={American economic review},
  volume={94},
  number={4},
  pages={1055--1084},
  year={2004},
  publisher={American Economic Association}
}

@article{Roulleau2026explicit,
  title={Explicit consumption functions with borrowing constraints: A continuous-time approach},
  author={Roulleau-Pasdeloup, Jordan},
  journal={Journal of Monetary Economics},
  pages={103919},
  year={2026},
  publisher={Elsevier}
}

@article{Schechtman1976income,
  title={An income fluctuation problem},
  author={Schechtman, Jack},
  journal={Journal of Economic Theory},
  volume={12},
  number={2},
  pages={218--241},
  year={1976},
  publisher={Elsevier}
}

@article{Schechtman1977some,
  title={Some results on "an income fluctuation problem"},
  author={Schechtman, Jack and Escudero, Vera LS},
  journal={Journal of Economic Theory},
  volume={16},
  number={2},
  pages={151--166},
  year={1977},
  publisher={Elsevier}
}

@article{Shimer2008liquidity,
  title={Liquidity and Insurance for the Unemployed},
  author={Shimer, Robert and Werning, Iv{\'a}n},
  journal={American Economic Review},
  volume={98},
  number={5},
  pages={1922--1942},
  year={2008},
  publisher={American Economic Association}
}

@article{Toda2017huggett,
  title={Huggett economies with multiple stationary equilibria},
  author={Toda, Alexis Akira},
  journal={Journal of Economic Dynamics and Control},
  volume={84},
  pages={77--90},
  year={2017},
  publisher={Elsevier}
}

@article{Toda2021necessity,
  title={Necessity of hyperbolic absolute risk aversion for the concavity of consumption functions},
  author={Toda, Alexis Akira},
  journal={Journal of Mathematical Economics},
  volume={94},
  pages={102460},
  year={2021},
  publisher={Elsevier}
}

@article{Toda2026characterization,
  title={Characterization of Concave Consumption Functions under Conditional Impatience},
  author={Toda, Alexis Akira},
  journal={arXiv preprint arXiv:2608.29488},
  year={2026}
}

@article{Wang2003caballero,
  title={Caballero meets Bewley: The permanent-income hypothesis in general equilibrium},
  author={Wang, Neng},
  journal={American Economic Review},
  volume={93},
  number={3},
  pages={927--936},
  year={2003},
  publisher={American Economic Association}
}

@article{Wang2007equilibrium,
  title={An equilibrium model of wealth distribution},
  author={Wang, Neng},
  journal={Journal of Monetary Economics},
  volume={54},
  number={7},
  pages={1882--1904},
  year={2007},
  publisher={Elsevier}
}

@article{Zeldes1989optimal,
  title={Optimal consumption with stochastic income: Deviations from certainty equivalence},
  author={Zeldes, Stephen P},
  journal={The Quarterly Journal of Economics},
  volume={104},
  number={2},
  pages={275--298},
  year={1989},
  publisher={MIT Press}
}

\appendix

\section{Proof of Theorem \ref{thm:cons_func}}
\label{sec:app_cons_func}

Let us start from equation \eqref{eq:ODE}, which I reproduce here for convenience:
\begin{align*}
c(a) = ra+y + \frac{\rho-r}{\gamma c'(a)};
\end{align*}
Now define the optimal savings function as $s(a) = ra+y-c(a)$. It follow immediately that $s'(a) = r+y-c'(a)$. Therefore, one can rewrite the implicit equation \eqref{eq:ODE} for the consumption function in terms of the savings function as 
\begin{align*}
s'(a) = r + \frac{r-\rho}{\gamma s(a)}.
\end{align*}
Using the fact that $s'(a) = ds/da$, now write:
\begin{align*}
\frac{\gamma s(a)}{r\gamma s(a)+\rho-r}ds = da    
\end{align*}
Now integrating from $\abar$ to a generic $a$ on the right hand side and the corresponding $s(\abar)=0$ to $s(a)$ on the left hand side, one obtains:
\begin{align}
\int_{0}^{s(a)}\frac{\gamma s(a)}{r\gamma s(a)+\rho-r}ds = a-\abar    
\label{eq:integrand}
\end{align}
Notice that one can write the integrand in terms of partial fractions as follows:
\begin{align*}
\frac{\gamma s(a)}{r\gamma s(a)+\rho-r} &= \frac{1}{r}\frac{\gamma r s(a) + \rho-r+r-\rho}{\gamma r s(a) + \rho-r}\\
&=\frac{1}{r}\left[1-\frac{\rho-r}{\gamma r s(a) + \rho-r}\right] 
\end{align*}
Therefore, the left hand side of equation \eqref{eq:integrand} evaluates to:
\begin{align*}
\frac{s(a)}{r} - \frac{\rho-r}{\gamma r^2}\ln\left\{\frac{\gamma r s(a)+\rho-r}{\rho-r}\right\},    
\end{align*}
where the argument of the $\ln$ in the second term is guaranteed to be strictly positive since the unconstrained savings function is given by $s_u(a)=(r-\rho)/(\gamma r)<s(a)$. This in turn implies that $\gamma r s(a)+\rho-r>0$ for all $(\abar,\infty)$. Putting it all together and defining $\tilde{s}(a)\defeq \gamma r s(a)/(\rho-r)$ for convenience, equation \eqref{eq:integrand} becomes:
\begin{align*}
\frac{1}{r}\frac{\rho-r}{\gamma r^2}\tilde{s}(a) - \frac{\rho-r}{\gamma r^2}\ln\left\{1+\tilde{s}(a)\right\} &= a - \abar\\
\Rightarrow\quad  \tilde{s}(a) - \ln\left\{1+\tilde{s}(a)\right\} &= \frac{\gamma r^2}{\rho-r}(a - \abar).
\end{align*}
Now define $\tilde{w}(a)\defeq 1 + \tilde{s}(a)$. The equation then becomes:
\begin{align*}
\tilde{w}(a)-1-\ln\left\{\tilde{w}(a)\right\} &= \frac{\gamma r^2}{\rho-r}(a - \abar)\\
\Rightarrow\quad \ln\left\{\tilde{w}(a)\right\} - \tilde{w}(a) &= -\left[1+\frac{\gamma r^2}{\rho-r}(a - \abar)\right].
\end{align*}
Now taking the exponential on both sides of that equation and then multiplying by -1, one gets:
\begin{align*}
- \tilde{w}(a)e^{-\tilde{w}(a)} = -e^{-\left[1+\frac{\gamma r^2}{\rho-r}(a - \abar)\right]}.   
\end{align*}
Now setting $w(a)=-\tilde{w}(a)$, notice that the solution of this equation is the Lambert W function evaluated at the right hand side by definition. Therefore, one can write
\begin{align}
w(a) = W\left(-e^{-\left[1+\frac{\gamma r^2}{\rho-r}(a - \abar)\right]}\right). 
\label{eq:solW}
\end{align}
Reverting the change of variables, one can write that $w(a) = -1-\tilde{s}(a) = w(a)$. From the definition of $\tilde{s}(a)$, one then gets the optimal savings function
\begin{align*}
s(a) = -\frac{\rho-r}{\gamma r}(1+w(a)).   
\end{align*}
Plugging this into the flow budget constraint, one finally gets:
\begin{align*}
c(a) &= ra+y-s(a)\\
&= ra+y+\frac{\rho-r}{\gamma r}(1+w(a)).
\end{align*}
In the limit as $a\to\infty$ (or, equivalently, as $\abar\to -\infty)$, this consumption function must boil down to the one for an unconstrained agent. In turn, this implies that $w(a)\to 0$ as $a\to\infty$. As $a\to\infty$, the argument of W in equation \eqref{eq:solW} goes to 0. This is where the two branches of the Lambert W function differ. While $W_0(x)\to 0$ as $x\to 0$, $W_0(x)\to -\infty$ in the same context. This is why the first branch of the Lambert W function is the relevant branch here. This establishes the second bullet point of Theorem \ref{thm:cons_func}. 

Now in the limit as $a\to \abar$, the argument in the Lambert W function goes to $-1/e$. By the properties of the Lambert W function, $W_0(-1/e)=-1$. This establishes the first bullet point of Theorem \ref{thm:cons_func}. Note in passing that this case doesn't help in discriminating between the two branches because the two branches give the same value at that point: $W_0(-1/e)=W_{-1}(-1/e)=-1$. Finally, the first branch of the Lambert W function is such that, for $x\in (-1/e,0)$ one has $W_0(x)\in (-1,0)$. This establishes the third bullet point of Theorem \ref{thm:cons_func} and completes the proof.

\section{Proof of Proposition \ref{prop:JH}}
\label{app:proof_prop_JH}
The fact that the first entry is equal to 1 is straightforward from the definition of $c^\ast(a,y;r,\abar)$. It naturally follows that the optimal consumption function is strictly increasing in permanent income. The second entry is given by
\begin{align}
\label{eq:app_Hessian_dcsta_a}
\frac{\p c^\ast(a,y;r,\abar)}{\p a} = -\frac{\rho-r}{\gamma r}\frac{1}{w(a)}\frac{\p w(a)}{\p a}, \quad w\defeq W_0\left(-e^{-\left(1+(a-\abar)\frac{\gamma r^2}{\rho-r}\right)}\right)   
\end{align}
Now define $f(a)=-e^{-\left(1+(a-\abar)\frac{\gamma r^2}{\rho-r}\right)}$ as in Theorem \ref{thm:cons_func}. Note that the Lambert W is such that:
\begin{align*}
\frac{\p w}{\p f(a)} = \frac{w}{f(a)(1+w)}    
\end{align*}
It follows that, using the chain rule, I obtain:
\begin{align*}
\frac{\p w}{\p a} &= \frac{w}{f(a)(1+w)}\frac{\p f(a)}{\p a}     \\
&= \frac{w}{f(a)(1+w)}\left(-\frac{\gamma r^2}{\rho-r}f(a)\right)\\
&= -\frac{w}{1+w}\frac{\gamma r^2}{\rho-r},
\end{align*}
where I have used the properties of the exponential. Plugging this into \eqref{eq:app_Hessian_dcsta_a}:
\begin{align}
\notag
\frac{\p c^\ast(a,y;r,\abar)}{\p a} &= r-\frac{\rho-r}{\gamma r}\frac{w(a)}{1+w(a)}\frac{\gamma r^2}{\rho-r}\\
&= r-\frac{rw(a)}{1+w(a)}\\
&= \frac{r}{1+w(a)},
\label{eq:app_dcda_star}
\end{align}
which completes the derivation of $\mathbf{J}_c$. The fact that the optimal consumption function is increasing in initial assets comes from the fact that the first branch of the Lambert W function is such that $-1<w(a)<0$ for any $f(a)\in(-1/e,0)$.

Moving on to the Hessian matrix, the fact that the first partial derivative with respect to $y$ is 0 implies that the second as well as the cross derivatives are also 0. The only entry that one needs to compute is the second-order derivative with respect to initial assets $a$. Starting from equation \eqref{eq:app_dcda_star} and using the chain rule:
\begin{align*}
\frac{\p^2 c^\ast(a,y;r,\abar)}{\p a^2} &= -\frac{r}{(1+w(a))^2}\frac{w(a)}{f(a)(1+w(a))}\frac{\p f(a)}{\p a}\\
&=\frac{r}{(1+w(a))^2}\frac{w(a)}{f(a)(1+w(a))}\frac{\gamma r^2}{\rho-r} f(a)\\
&= \frac{\gamma r^3}{\rho-r}\frac{w(a)}{(1+w(a))^3}<0
\end{align*}
where the sign on the last line follows from the fact that it has been established that $-1<w(a)<0$. It follows that $1+w(a)>0$ and thus $w(a)/(1+w(a))<0$ for all $a,y\geq 0$. As a result, the consumption function is concave in assets.

\section{Proof of Proposition \ref{prop:dcdr_decomp}}
\label{sec:app_dcdr_decomp}
Given the consumption function from Theorem \ref{thm:cons_func} and taking the partial derivative with respect to $r$, one gets:
\begin{align*}
\frac{\p c^\ast(a,y;r,\abar)}{\p r} = a - \frac{\rho}{\gamma r^2}(1+w(a)) +\frac{\rho-r}{\gamma r}\frac{\p w(a)}{\p r}    
\end{align*}
As before, $w(a)\defeq W_0(f(a))$ where $f(a)=-e^{-z(a)}$ and where $z(a)\defeq \left(1+(a-\abar)\frac{\gamma r^2}{\rho-r}\right)$. It follows from the chain rule that $\p w(a)/\p r = (\p w(a)/\p f(a))\cdot(\p f(a)/\p z(a))\cdot(\p z(a)/\p r)$. These can be written as follows:
\begin{align*}
\frac{\p w}{\p f} = \frac{w(a)}{f(1+w(a))}\  \frac{\p f}{\p z} = e^{-z}=-f\  
\frac{\p z}{\p r} = (a-\abar)\gamma r\frac{2\rho-r}{(\rho-r)^2}
\end{align*}
Putting it all together, I can finally write:
\begin{align*}
\frac{\p w(a)}{\p r} = -\frac{w(a)}{1+w(a)}\frac{2\rho-r}{(\rho-r)^2}(a-\abar)\gamma r    
\end{align*}
Plugging this into the partial derivative, I obtain:
\begin{align*}
\frac{\p c^\ast(a,y;r,\abar)}{\p r} &= a - \frac{\rho}{\gamma r^2}(1+w(a)) - \frac{\rho-r}{\gamma r}\left[\frac{w}{f(a)(1+w(a))}f(a)(a-\abar)\gamma r\frac{2\rho-r}{(\rho-r)^2}\right]\\
&= a - \frac{\rho}{\gamma r^2}(1+w(a)) - \frac{w(a)}{1+w(a)}\frac{2\rho-r}{\rho-r}(a-\abar),
\end{align*}
which completes the proof.

\section{Proof of Proposition \ref{prop:ac_au}}
\label{sec:app_ac_au}
The proof is a bit involved given that $\p c^\ast(a,y;r,\abar)/\p r$ turns out
not to be strictly monotonic on $(\abar,+\infty)$. Therefore, I prove this
result in a number of steps:
\begin{enumerate}
    \item prove that $\dfrac{\p c^\ast(\abar,y;r,\abar)}{\p r}<0$,
    \item prove that $\dfrac{\p c^\ast(a,y;r,\abar)}{\p r}$ has a unique
          minimum at some $a^{\min}>\abar$,
    \item prove that $\dfrac{\p c^\ast(a,y;r,\abar)}{\p r}$ is strictly decreasing on
          $(\abar, a^{\min})$ and strictly increasing for $a>a^{\min}$.
\end{enumerate}
Once these steps are established, the fact that
$\p c^\ast(a,y;r,\abar)/\p r > \p c^\ast(a,y;r)/\p r$ for all $a$
(see Proposition \ref{prop:dcdr_decomp}) immediately implies that the MPC in the constrained case
crosses zero before the unconstrained one does.
 
\subsection{Negative sign at the constraint}
 
Proposition \ref{prop:dcdr_decomp} establishes that
\begin{align*}
  \frac{\p c^\ast(a,y;r,\abar)}{\p r}
  = a - \rho\frac{1+w(a)}{\gamma r^2}
    -\frac{w(a)}{1+w(a)}\frac{2\rho-r}{\rho-r}(a-\abar).
\end{align*}
As $a\to\abar^+$ one has $w(a)\to -1^+$, so the third term involves the
indeterminate form $w(a)/(1+w(a))\times(a-\abar)\to(-\infty)\times 0$. Using the Lambert identity
$\ln(-w(a))+w(a)=-1-(a-\abar)\gamma r^2/(\rho-r)$, one can write:
\begin{align*}
  \frac{w(a)}{1+w(a)}(a-\abar)
  = \frac{w(a)}{1+w(a)}\cdot\frac{\rho-r}{\gamma r^2}\bigl[-1-w(a)-\ln(-w(a))\bigr].
\end{align*}
Focusing on terms in $w(a)$, one can write:
\begin{align*}
  -\frac{w(a)}{1+w(a)}\bigl[1+w(a)+\ln(-w(a))\bigr]
  = -w(a) - \frac{w(a)\ln(-w(a))}{1+w(a)}.
\end{align*}
As $w(a)\to -1^+$, the first term tends to $1$. The second term is of the form
$0/0$ form. Applying l'H\^opital's rule gives
\begin{align*}
  \lim_{w(a)\to -1^+}\frac{w(a)\ln(-w(a))}{1+w(a)}
  = \lim_{w(a)\to -1^+}\bigl[\ln(-w(a))+1\bigr]
  = \ln(1)+1 = 1.
\end{align*}
Therefore
\begin{align*}
  \lim_{w(a)\to -1^+}\left[-w(a) - \frac{w(a)\ln(-w(a))}{1+w(a)}\right] = 1-1 = 0,
\end{align*}
and the third term vanishes as $a\to\abar^+$.  The second term likewise
vanishes in the limit since $1+w(a)\to 0$.  Evaluating the remaining first term
at $a=\abar$ one gets:
\begin{align*}
  \frac{\p c^\ast(\abar,y;r,\abar)}{\p r} = \abar < 0.
\end{align*}
 
\subsection{Unique minimum}
 
Define $h(a)\defeq\p c^\ast(a,y;r,\abar)/\p r$.  The cross-derivative
\begin{align*}
  G(w(a)) \defeq h'(a)
  = \frac{\p^2 c^\ast(a,y;r,\abar)}{\p r\,\p a}
  = \frac{1}{1+w(a)}+\frac{2\rho-r}{\rho-r}\frac{w(a)}{(1+w(a))^3}\bigl[-1-w(a)-\ln(-w(a))\bigr]
\end{align*}
is a function of $w(a)\in(-1,0)$, which is strictly increasing in $a$. Letting $w(a)=-e^{-x}$, $x=rT>0$ and noting that $(1+w(a))^3=(1-e^{-x})^3>0$, I define $P(x)\defeq (1+w(a))^3G(w(a))$. It follows that $\mathrm{sign}(G(w(a)))=\mathrm{sign}(P(x))$, where
\begin{align*}
  P(x) \defeq e^x+e^{-x}-2-\beta(e^{-x}-1+x),
  \qquad \beta\defeq\frac{2\rho-r}{\rho-r}>2,
\end{align*}
I establish that $P$ changes sign exactly once
on $(0,+\infty)$, from negative to positive.
 
\medskip\noindent\textbf{Step 1: $P(x)<0$ for small $x>0$.}
Direct computation gives $P(0)=0$ and $P'(0)=0$.  The second derivative is
$P''(x)=e^x+(1-\beta)e^{-x}$, so $P''(0)=2-\beta<0$ (since $\beta>2$).
A Taylor expansion therefore gives
\begin{align*}
  P(x) = \frac{2-\beta}{2}\,x^2+O(x^3) < 0
  \quad\text{for all sufficiently small }x>0.
\end{align*}
 
\medskip\noindent\textbf{Step 2: $P$ has a unique global minimum at
$\tilde{x}=\ln(\beta-1)$.}
One can show that
$e^x P'(x)=Q(y)=(y-1)(y+1-\beta)$ with $y=e^x$.
Since $y>1$ for $x>0$, the sign of $Q(y)$ is the sign of $y+1-\beta$ and therefore $P'(x)<0$ for $x\in(0,\tilde{x})$, i.e. $P$ is strictly decreasing on that interval. Likewise, $P'(x)>0$ for $x>\tilde{x}$, and $P$ is strictly increasing on that interval. Hence $\tilde{x}=\ln(\beta-1)>0$ is the unique global minimum of $P$.
 
\medskip\noindent\textbf{Step 3: $P(\tilde{x})<0$.}
Since $P(x)<0$ for small $x>0$ and $P$ is strictly decreasing on $(0,\tilde{x})$,
it follows that $P(\tilde{x})<P(x)<0$ for all $x\in(0,\tilde{x})$.
 
\medskip\noindent\textbf{Step 4: $P(x)\to+\infty$ as $x\to+\infty$.}
Since $P(x)=e^x+e^{-x}-2-\beta(e^{-x}-1+x)\sim e^x$ as $x\to+\infty$.
 
\medskip\noindent\textbf{Step 5: $P$ crosses zero exactly once.}
$P(\tilde{x})<0$ and $P$ is strictly increasing on $(\tilde{x},+\infty)$ with
$P(x)\to+\infty$.  By the intermediate value theorem there exists a unique
$x^\ast>\tilde{x}$ such that $P(x^\ast)=0$, with $P(x)<0$ for
$x\in(0,x^\ast)$ and $P(x)>0$ for $x>x^\ast$.
 
\medskip\noindent\textbf{Consequence for $G(w(a))$ and $h(a)$.}
Since $\mathrm{sign}(G(w(a)))=\mathrm{sign}(P(x))$, the cross-derivative
$G(w(a))=h'(a)$ is strictly negative on $(\abar,a^{\min})$ and strictly positive
on $(a^{\min},+\infty)$, where $a^{\min}$ is the asset level corresponding to
$x^\ast$ via $w(a)=-e^{-x}$ and the Lambert identity.  Therefore $h(a)$ is
strictly decreasing on $(\abar,a^{\min})$ and strictly increasing on
$(a^{\min},+\infty)$, with a unique global minimum at $a^{\min}$.
  
\subsection{MPC decreasing, then increasing}
 
By steps 1 to 5 and the consequence established above, $h(a)=\p c^\ast/\p r$ is
strictly decreasing on $(\abar,a^{\min})$ and strictly increasing on
$(a^{\min},+\infty)$.  This completes the proof of items 1 to 3.
 
\subsection{Conclusion}
 
The MPC with the borrowing constraint $(i)$ starts at
$h(\abar)=\abar<0$, $(ii)$ decreases to its global minimum $h(a^{\min})$,
and $(iii)$ increases thereafter toward $+\infty$.  Pairing this with
Proposition \ref{prop:dcdr_decomp}, which establishes $h(a)>\p c_u^\ast/\p r$ for all $a$, it
follows that $h$ crosses zero at some $\overline{a}^c<\overline{a}^u$.
Uniqueness of the zero follows from the fact that $h$ has a unique minimum and
is strictly increasing after $a^{\min}$: once $h$ becomes positive it stays
positive. To show $\overline{a}^c>0$, note that at $a=0$:
\begin{align*}
  \frac{\p c^\ast(0,y;r,\abar)}{\p r}
  = -\rho\frac{1+w(a)}{\gamma r^2}
    +\frac{w(a)}{1+w(a)}\frac{2\rho-r}{\rho-r}\abar.
\end{align*}
The first term is strictly negative.  Since $w(a)\in(-1,0)$ and $\abar\leq 0$,
the second term is also non-positive.  Hence the MPC at $a=0$ is strictly
negative, so $\overline{a}^c>0$. 

\section{Proof of Proposition \ref{prop:F_CRRA}}
\label{sec:app_F_CRRA}
As with the CARA case, it will be useful to work with the savings function $s(a)$ which is such that $c'(a)=r-s'(a)$. Plugging this into equation \eqref{eq:ODE_CRRA}, one obtains after re-arranging:
\begin{align*}
s'(a)=r+\frac{\rho-r}{\gamma}\frac{ra+y-s(a)}{s(a)}    
\end{align*}
In contrast with the CARA case, this one is not autonomous given that $a$ appears independently of $s(a)$ on the right hand side. This is what will prevent a closed form solution. In order to transform it into something that will be easier to handle, define $\check{s}(a)\defeq s(a)/(ra+y)$ which is the savings \textit{rate}. Using the fact that $s'(a) = \check{s}'(a)(ra+y)+r\check{s}(a)$, one can then write:
\begin{align*}
\check{s}'(a)(ra+y)+r\check{s}(a)=r+\alpha(\rho-r)\frac{1-\check{s}(a)}{\check{s}(a)}.    
\end{align*}
Now multiplying both sides by $\check{s}(a)$, dividing by $r$ and re-arranging gives:
\begin{align}
\label{eq:scheck_a_factored}
\check{s}(a)\check{s}'(a)\left(a+\frac{y}{r}\right) &= -\check{s}(a)^2 + (1-\alpha\theta)\check{s}(a) + \alpha\theta\\
&=(1-\check{s}(a))(\check{s}(a)+\alpha\theta)
\end{align}
Now consider the following change of variables: $\tau(a)=\ln\left\{ra+y\right\}$. Using the chain rule, one can write $d\check{s}(a)/da = (d\check{s}(a)/d\tau)(d\tau/da)=(d\check{s}(a)/d\tau)(r/(ra+y))$ which in turn implies that $(ra+y)(d\check{s}(a)/da)=r(d\check{s}(a)/d\tau)$. With this, equation \eqref{eq:scheck_a_factored} becomes:
\begin{align}
\nonumber
(1-\check{s}(a))(\check{s}(a)+\alpha\theta) &= \check{s}(a)\frac{d\check{s}(a)}{d\tau}\\
\Rightarrow\quad \frac{\check{s}(a)}{(1-\check{s}(a))(\check{s}(a)+\alpha\theta)}d\check{s}(a) &= d\tau
\label{eq:dscheck_a_dtau}
\end{align}
Using partial fractions, one can write
\begin{align*}
\frac{\check{s}(a)}{(\check{s}(a)-1)(\check{s}(a)+\alpha\theta)} = \frac{A_1}{\check{s}(a)-1} + \frac{A_2}{\check{s}(a)+\alpha\theta},    
\end{align*}
where $A_1=1/(1+\alpha\theta)$ and $A_2=\alpha\theta/(1+\alpha\theta)$. Using this to integrate from $\check{s}(\abar)=0$ to a generic $\check{s}(a)$ for the left hand side and from $\ln(c_{min})$ to $\ln(ra+y)$, one obtains:
\begin{align*}
\int_{0}^{\check{s}(a)}\left(\frac{A_1}{1-x}-\frac{A_2}{x+\alpha\theta}\right)dx  &= -A_1\ln(1-\check{s}(a))-A_2\ln(\check{s}(a)+\alpha\theta)+A_2\ln(\alpha\theta)\\
&= \ln(ra+y)-\ln(c_{min}).
\end{align*}
Multiplying by $1+\alpha\theta$, undoing the change of variables and re-arranging, one gets:
\begin{align*}
\ln\left\{\frac{c(a)}{ra+y}\right\}+\ln\left\{\left(\frac{(1+\alpha\theta)(ra+y)-c(a)}{\alpha\theta(ra+y)}\right)^{\alpha\theta}\right\}=\ln\left\{\left(\frac{c_{min}}{ra+y}\right)^{1+\alpha\theta}\right\}    
\end{align*}
Exponentiating both sides and re-arranging, one finally obtains
\begin{align*}
c(a) = (1+\alpha\theta)(ra+y)-\alpha\theta c_{min}\left(\frac{c_{min}}{c(a)}\right)^{\frac{1}{\alpha\theta}},
\end{align*}
which is the definition of $F(c(a),a;r)$. This completes the proof.

\section{Proof of Theorem \ref{thm:dcdr_CRRA}}
\label{sec:app_CRRA}

Keeping $r$ fixed, the partial derivative of $F$ with respect to consumption is given by
\begin{align*}
\frac{\p F}{\p c} &= 1 - \alpha\theta\kappa\left(\frac{c}{c_{min}}\right)^{-(\kappa+1)}    \\
&= 1 - x^{-(\kappa+1)}
\end{align*}
where $x\defeq c/c_{min}$ and I have used the fact that $\kappa=1/(\alpha\theta)$. Given that $0<c_{min}< c$, it follows that $0<x< 1$ for $a\in(\abar,\infty)$. As a result, the partial derivative is guaranteed to be strictly positive. Therefore, the sign of $\p c(a;r)/\p r$ is the same as the one for $-F_r$, which, using the chain rule is given by:
\begin{align*}
-F_r = (1+\alpha\theta)(a-\abar x^{-\kappa}) - \frac{\alpha\rho}{r^2}(r&+y-c_{min}x^{-\kappa})+\frac{\rho}{r(\rho-r)}c_{min}\ln(c)x^{-\kappa}    
\end{align*}
Along the solution path, we have $F=0$ which one can use to express $a$ as a function of $c$ so that $-F_r$ becomes a function of $x$ only. Indeed, $F=0$ implies that 
\begin{align*}
a = \frac{1}{r}\left[c_{min}\frac{x+\alpha\theta x^{-\kappa}}{1+\alpha\theta}-y\right].    
\end{align*}
Plugging that into the equation for $-F_r$ and re-arranging, one obtains:
\begin{align*}
-F_r = \frac{c_{min}}{r}\left\{Ax-B+\left[B-A+C\ln(x)\right]x^{-\kappa}\right\},    
\end{align*}
where $A=(\gamma-1)/(\gamma+\theta)$, $B=(1+\alpha\theta)y/c_{min}$ and $C=\rho/(\rho-r)$. It follows that studying the sign of $\p c(a;r)/\p r$ boils down to studying the sign of the term in curly brackets on the right hand side, which we define as $n(x)$ given that it sits at the numerator. First of all, notice that its derivative is given by:
\begin{align}
\nonumber
n'(x) &= A + x^{-(\kappa+1)}\left[C-\kappa(B-A+C\ln(x))\right]\\
\label{eq:nprime_x}
&\defeq A+x^{-(\kappa+1)}\left[D-\kappa C\ln(x)\right].
\end{align}
For later, it will be useful to compute the value of $n'(x)$ for $c=c_{min}$:
\begin{align*}
n'(1) = A+D=(1+\kappa)\frac{r\abar}{c_{min}}<0.    
\end{align*}
where I have used Assumption \ref{ass:parameters}. Letting $t=\ln(x)$ and $u=D-\kappa C \cdot t$, equation \eqref{eq:nprime_x} is equivalent to:
\begin{align*}
n'(x) = A+\frac{e^{-\nu D}}{\nu}(\nu u)e^{\nu u}\quad\text{so}\quad n'(x)<0 \Leftrightarrow (\nu u)e^{\nu u}<\zeta\defeq -A\nu e^{\nu D}     
\end{align*}
where I have defined $\nu=(1+\kappa)/(\kappa C)$. The last equation clearly lends itself to the Lambert W function, which means that one has to be careful about branches. To that effect, let me first consider the case where $\gamma>1$ so that the EIS is less than one and $A>0$. In that case, the right hand side of the last equation is strictly negative and the endpoints for the interval of $x$ are given by the two branches of the Lambert W function:
\begin{align*}
\left\{x:n'(x)<0\right\} = (x_-,x_+)\quad \text{where}\quad x_- = exp\left\{\frac{\nu D-W_0(\zeta)}{\nu\kappa C}\right\}; x_+ = exp\left\{\frac{\nu D-W_{-1}(\zeta)}{\nu\kappa C}\right\}     
\end{align*}
It has been established already that $n'(1)<0$ so that $(x_-,x_+)$ is non-empty and contains 1. In particular, $x_-<1$ and therefore $n(x)$ strictly decreases on $(1,x_+)$. Given that $n(1)<0$, $n(x)<0$ on that interval. By definition one can write that $n'(x_+)=0$. In addition, using the chain rule one can compute the second derivative of $n(x)$ as follows:
\begin{align*}
n''(x)=-\kappa C x^{-(\kappa+2)}(1+\nu u).    
\end{align*}
For all $x\geq x_+$ we have $\nu u\leq W_{-1}(\zeta)\leq -1$ by definition. It follows that $n(x)$ is convex on $(x_+,\infty)$. Putting it all together, $n'(x_+)=0$ and $n''(x)> 0$ for $x> x_+$ so that $n'(x)> 0$ on the same interval. It follows that $n(x)$ is strictly increasing on $(x_+,\infty)$ and starting from a value $n(x_+)< 0$. Using the standard fact that power beats log, it is straightforward to show that $n(x)\to\infty$ as $x\to\infty$ because of the linear $Ax$ term. As a result, there exists a single point where $n(x)$ crosses 0. Equivalently, there exists $\overline{a}^{CRRA}\in (\abar,\infty)$ such that $\p c^*(a;r)/\p r> 0$ if and only if $a>\overline{a}^{CRRA}$. This proves the second bullet point of Theorem \ref{thm:dcdr_CRRA}. 

To prove the first point, note that if $\gamma\leq 1$ so that the EIS is larger than 1 then we have $\zeta\geq 0$. In that case, the interval for which $n'(x)<0$ is given by $[x_0,\infty)$ where \begin{align*}
x_0 \defeq \exp\left\{\frac{\nu D-W_{0}(\zeta)}{\nu\kappa C}\right\}.     
\end{align*}
Given that $n'(1)<0$ regardless of whether $\zeta\lessgtr 0$, it follows that $n'(x)<0$ on $[1,\infty)$. Given that $n(1)<0$ and that $n(x)$ is strictly decreasing on $[1,\infty)$, $F_r$ is always positive so that consumption necessarily decreases as soon as the interest rate increases. This proves the first bullet point of Theorem \ref{thm:dcdr_CRRA}.

\end{document}